\newtheorem{lemma}{Lemma}
\newtheorem{cor}{Corollary}
\newtheorem{theorem}{Theorem}
\DeclareMathOperator{\diag}{diag}
\DeclareMathOperator{\Var}{Var}
\DeclareMathOperator{\Tr}{Tr}
\DeclareMathOperator{\const}{const.}
\DeclareMathOperator{\GCV}{GCV}
\DeclareMathOperator{\median}{median}
\title{M-type penalized splines with auxiliary scale estimation}
\author{Ioannis Kalogridis and Stefan Van Aelst}
\begin{document}
\maketitle

\begin{abstract}
Penalized spline regression is a popular and flexible method of obtaining estimates in nonparametric models but the classical least-squares criterion is highly susceptible to model deviations and atypical observations. Penalized spline estimation with a resistant loss function is a natural remedy, yet to this day the asymptotic properties of M-type penalized spline estimators have not been studied. We show in this paper that M-type penalized spline estimators achieve the same rates of convergence as their least-squares counterparts, even with auxiliary scale estimation. We further find theoretical justification for the use of  a small number of knots relative to the sample size. We illustrate the benefits of M-type penalized splines in a Monte-Carlo study and two real-data examples, which contain atypical observations.
\end{abstract}

\section{Introduction}

Based on data $(x_i, Y_i), \ldots, (x_n, Y_n)$ with fixed $x_i \in [a,b]$ and $-\infty<a<b < \infty$, consider the classical nonparametric regression model
\begin{equation}
\label{eq:1}
Y_i = f(x_i) + \epsilon_i,
\end{equation}
where $f(\cdot)$ is a sufficiently smooth function which we shall endeavour to estimate and the $\epsilon_i, \ i=1, \ldots, n$ are independent and identically distributed error terms, commonly assumed to have zero mean and finite variance $\sigma^2$.

Nonparametric regression has been a burgeoning field for many years and many ingenious methods have been proposed for the estimation of the regression function $f(\cdot)$. These methods broadly comprise kernel regression, orthogonal series, splines and wavelets, see, e.g., \citet{Was:2006} for an overview. In this paper, we focus on robust estimation with penalized splines. Owing to their ease of fitting and flexible choice of knots and penalties, penalized splines have been exceedingly popular in recent years following the seminal works of \citet{O:1986} and \citet{Eilers:1996}. Penalized splines offer a compromise between the simplicity of (unpenalized) regression splines and the computational complexity of smoothing splines, see \citet[Chapter 7]{Wahba:1990} for this point. Asymptotic properties of least-squares penalized spline estimators have been studied by \citet{Hall:2005}, who established their consistency, \citet{Li:2008}, who derived the equivalent kernel representation for lower-order splines, and more broadly by \citet{Claeskens:2009}, who obtained rates of convergence for a variety of settings.

It is well-known that estimators derived from the minimization of an $L_2$ norm are susceptible to atypical observations. That is, a single gross outlier can significantly distort the estimates as well as inferences based on them. This fact has motivated proposals that aim to achieve some degree of resistance towards outlying observations. \citet{Lee:2007} proposed replacing the squared loss with a loss function that increases more slowly and supplied an algorithm based on pseudo-observations. In the same vein, \citet{T:2010} proposed minimizing a robust scale of the residuals with an additional penalty term in order to obtain resistant estimates. However, despite the well-studied theoretical properties of least-squares penalized splines, it is curious that no asymptotic results have been established outside that framework. This comes in stark contrast to robust smoothing and regression splines whose asymptotic properties have been established as early as \citet{Cox:1983} and \citet{Shi:1995} with significant extensions with respect to scale estimation offered by \citet{Cunningham:1991} and \citet{He:1995} respectively.

To fill this gap we study the consistency and establish rates of convergence of general M-type penalized spline estimators with a derivative-based penalty. Moreover, we also consider the case where a preliminary scale estimator is used to standardize the residuals in the loss function. As mentioned previously, this type of estimator was already considered by \citet{Lee:2007} from a computational point of view, but without any theoretical support. Our approach further differs from their proposal in three key aspects. Firstly, we use the nearly orthogonal B-spline basis instead of the badly conditioned truncated polynomials and derive theoretical properties based on that representation. Secondly, we use a robust preliminary scale estimate that does not require model fitting as opposed to the originally proposed concomitant scale estimate. This dramatically reduces the computational burden as there is no need anymore to iterate between coefficient and scale estimates. Finally, we propose a fast, effective and automatic method of selecting the penalty parameter which has the advantage that the penalty parameter no longer needs to be chosen at each iteration of the algorithm.

The rest of the paper is structured as follows. We review a few basic facts about spline estimation in Section 2 and describe the proposed M-type penalized estimator and its computation in Section 3. In Section 4 the asymptotic properties of the estimator are studied. We obtain an asymptotic expansion of the estimator and show that it achieves the optimal rates of convergence, even with auxiliary scale estimation. Section 5 examines the performance of the proposed method via a Monte-Carlo study while Section 6 illustrates the advantages of its use in two real data examples. Finally, some possible directions for future research are discussed in the concluding section of this paper.

\section{Spline-based estimation}

A spline is defined as a piecewise polynomial that is smoothly connected at its joints (knots). More specifically, for any fixed integer $p \geq 1$, denote $S_{K}^p$ the set of spline functions of order $p$ with knots $a=t_0<t_1 \ldots <t_{K+1}=b$. Then for $p=1,\ S_{K}^1$ is the set of step functions with jumps at the knots and for $p\geq 2$,

\begin{equation*}
S_{K}^p = \left\{ s\in \mathcal{C}^{p-2}[a,b]: s(x) \ \text{is a polynomial of degree $(p-1)$ on each subinterval $[t_i, t_{i+1} ]$} \right\}.
\end{equation*}
It is easy to see that $S_{K}^p$ is a $p+K$ dimensional subspace of $\mathcal{C}^{p-2}[a,b]$ and a basis may be derived by means of the truncated polynomials $1, x, \ldots, x^{p-1}, (x-t_1)_{+}^{p-1}, \ldots (x-t_{K})_{+}^{p-1}$ with $(x)_{+} = x \mathcal{I}(x\geq 0)$. However, truncated polynomial functions are known to be highly collinear for a large number of knots. Hence, it is preferable to use the more stable B-spline basis for $S_{K}^p$, which we now briefly discuss; we refer to \citet[Chapter IX]{DB:2001} for a full treatment.

The B-spline functions may be defined recursively but they may also be directly derived as linear combinations of the truncated polynomial functions $(x-t_1)_{+}^{p-1}, \ldots, (x-t_{K})_{+}^{p-1} $. Several important properties follow immediately from this construction. Let $\{t_i\}_{i=1}^{K+2p}$ be an augmented and relabelled sequence of knots obtained by repeating $t_0$ and $t_{K+1}$ exactly $p$ times. The B-spline basis for the family $S_{K}^p$ is given by

\begin{equation}
\label{eq:2}
B_{K,i}(x) = \left(t_{i+p}-t_i\right) \left[t_i, \ldots, t_{i+p} \right] (t-x)_{+}^{p-1}, \quad i=1, \ldots, K+p
\end{equation}
where for a function $g$ the placeholder notation $\left[t_i, \ldots, t_{i+p} \right]g$ denotes the $p$th order divided difference of $g(\cdot)$ at $t_i, \ldots, t_{i+p}$. Among other interesting properties B-splines of order $p$ satisfy

\begin{itemize}
\item[(a)] Each $B_{K,i}$ is a polynomial of order $p$ on each interval $(t_i, t_{i+1})$ and has $(p-2)$ continuous derivatives.
\item[(b)] $0<B_{K,i }(x) \leq 1$ for $x \in (t_i, t_{i+p})$ and $B_{K,i}(x)=0$ otherwise. 
\item[(c)] $\sum_{i=1}^{K+p} B_{K,i}(x) = 1$ for all $x \in [a,b]$.
\end{itemize}
Property (b) is referred to as the local support property of the B-spline basis and it is one of the reasons for the popularity of the basis in digital computing and functional approximation. Further properties of splines and the B-spline basis may be found in the classical monographs of \citet{Devore:1993}, \citet{DB:2001}  and \citet{Schumaker:2007}. Spline functions from a statistical perspective are covered in, e.g.,  \citet{Wahba:1990}, \citet{Green:1993}, \citet{Eubank:1999} and \citet{Gu:2013}.

Of particular interest are the approximation properties of spline functions. Provided that $f(\cdot)$ is a sufficiently smooth function, in the sense  of having a number of continuous derivatives, the spline approximation theorems, see, e.g., \citet[Chapter 6]{Schumaker:2007}, allow us to deduce that $f(\cdot)$ may be well-approximated by a spline function $f^{\star}(x) = \sum_{j=1}^{K+p} \beta_j^{\star} B_{K,j}(x)$. This fact underlies all spline-based estimation techniques except for the smoothing spline construction, see the end of this section. A reasonable approximation may thus be constructed by expanding $f(\cdot)$ in the B-spline basis and estimating the coefficient vector $\boldsymbol{\beta}$. The most popular estimation method to that end is the least squares criterion leading to the minimization of

\begin{equation}
\label{eq:3}
n^{-1}\sum_{i=1}^n \left\{ Y_i - \sum_{j \leq K+p} \beta_j B_{K, j}(x_i) \right\}^2.
\end{equation}
See \citet{Ag:1980}, \citet{Wegman:1983} and \citet{Shen:1998} for more details on least-squares regression splines. To compensate for the lack of robustness of the least-squares criterion \citet{Shi:1995}  proposed minimizing instead
\begin{equation}
\label{eq:4}
n^{-1}\sum_{i=1}^n  \rho \left( Y_i - \sum_{j \leq K+p} \beta_j B_{K, j}(x_i) \right),
\end{equation}
for a suitably chosen convex function $\rho$, see \citet{Huber:1973}. Unfortunately, a common drawback in both procedures is the sensitivity of the estimator to the number of knots as well as their position. Selection procedures are non-trivial and very often involve either a stepwise/backward knot placement procedure or minimization of a complex information criterion, \citet[Chapter 6]{Eubank:1999}. 

An alternative estimation method results from adding a ridge-type roughness penalty to the above minimization problem effectively shifting the focus from the knots to the penalty parameter. In particular, \citet{O:1986} proposed adding a roughness penalty in the form of the integrated squared $q$th derivative of the spline function and minimize

\begin{equation}
\label{eq:5}
n^{-1}\sum_{i=1}^n \left\{ Y_i - \sum_{j \leq K+p} \beta_j B_{K, j}(x_i) \right\}^2 + \lambda \int_{[a,b]} \left[ \left\{ \sum_{j \leq K+p} \beta_j B_{K, j}(t) \right\}^{(q)} \right]^2 dt ,
\end{equation}
for some knot sequence $t_1, \ldots, t_{K+2p}$. In the original proposal, $q$ was equal to 2 but generalization to higher order penalties is straightforward. The present penalized spline estimator, commonly referred to as O-spline, has two smoothing parameters: the number of knots and the penalty parameter. The inclusion of the penalty parameter, however, affords us the use of a large number of knots. It is customary to select these knots in quasi-automated manner, for example, a large number of them may be placed at the quantiles of the $x_i$. The penalty parameter $\lambda$ is usually chosen either through cross-validation methods or through the mixed-model connection. We refer to \citet{Ruppert:2003, Wood:2017} for more details and illustrative examples for the case of the least-squares penalized spline estimator.

Finally, we briefly review smoothing spline estimators. These estimators arise as solutions to the variational problem
\begin{equation}
\label{eq:6}
\min_{f \in \mathcal{W}^{q, 2}([a,b])} \left[ n^{-1} \sum_{i=1}^n \{Y_i - f(x_i)\}^2 + \lambda \int_{[a, b]} \left\{f^{(q)}(x)\right\}^2 dx \right],
\end{equation}
where $\mathcal{W}^{q, 2}([a,b])$ refers to the Sobolev space of order $q$, that is,
\begin{align*}
\mathcal{W}^{q, 2}([a,b]) = \{ f:[a,b] \to \mathbb{R}, f\ &\text{has $q-1$ absolutely continuous derivatives}  \\  & f^{(1)}, \ldots, f^{(q-1)}\ \text{and} \int_a^b \{f^{(q)}(x)\}^2 dx< \infty  \},
\end{align*}
See \citet{Adams:2003} for more details on Sobolev spaces. It is a remarkable fact that without any constraints the solution to the above problem is a special kind of spline: a natural polynomial spline of degree $2q-1$ with knots at $x_i$. This spline may also be written as a linear combination of B-spline functions with special care so that the boundary conditions are respected, see \citet{DB:2001}. Since this spline is the unique solution to the problem, smoothing splines, unlike regression and penalized splines, incur no approximation error. More details on least-squares smoothing splines may be found in \citet{Eubank:1999} while M-type smoothing splines are discussed in \citet{Huber:1979}, \citet{Cox:1983}, \citet{Cunningham:1991} and \citet{Eg:2009}, who study the $L_1$ smoothing spline in detail.

\section{M-type penalized spline estimators}

\subsection{Estimating equations and preliminary scale estimation}
Up to now M-type penalized spline estimators have received much less attention in the literature in comparison to both M-type regression splines and smoothing splines. However,
penalized splines are situated in between regression and smoothing splines and as such are well-suited for a wide variety of problems. To be precise, the estimator $\widehat{f}(\cdot) = \sum_{j} B_{K,j}(\cdot) \widehat{\beta}_j$ that we consider minimizes

\begin{equation}
\label{eq:7}
\frac{1}{n}\sum_{i=1}^n \rho \left( Y_i - \sum_{j \leq K+p} \beta_j B_{K, j}(x_i) \right) + \lambda \int_{[a,b]} \left[ \left\{ \sum_{j \leq K+p} \beta_j B_{K, j}(t) \right\}^{(q)} \right]^2 dt ,
\end{equation}
for some nonnegative $\rho$ that is symmetric about zero, satisfies $\rho(0)=0$ and can be either convex or non-convex. The choice $\rho(x) = x^2$ brings us back to the least squares minimization problem, but \eqref{eq:7} allows for more general functions that reduce the effect of large residuals. Examples include Huber's function \citep{Huber:1964} given by
\begin{equation}
\rho_c(x) = \begin{cases} (1/2)x^2, & |x| \leq c \\ c|x|-(1/2)c^2, & |x| >c
\end{cases}
\end{equation}
where the constant $c$ regulates the degree of resistance. For large values of $c$ one essentially obtains the ordinary quadratic loss function but for smaller values a higher degree of robustness is achieved.  Many other $\rho$-functions can be constructed by imitating parametric likelihood models, such as the Cauchy, logistic and Laplace models. The case of a Laplace model, in particular, leading to the $L_1$ loss function may be understood as a limiting case of the Huber loss function for $c \to 0 $.

For ease of notation define the spline basis vector evaluated at $x$ as $\mathbf{B}(x)$, that is, $\mathbf{B}(x) := \{B_{K, 1}(x), \ldots, B_{K,K+p}(x) \}^{\top}$, denote the $n \times (K+p)$ spline design matrix with $\mathbf{B} := \{ \mathbf{B}^{\top}(x_1), \ldots, \mathbf{B}^{\top}(x_n) \}^{\top}$ and let $\mathbf{D}_{i,j} = \int B_{K, j}^{(q)} B_{K, i}^{(q)}$. See \citet[pp. 116-117]{DB:2001} for derivative expressions for B-splines. With this notation, it is easy to see that the minimizer $\boldsymbol{\widehat{\beta}}$ satisfies

\begin{equation*}
-\frac{1}{n} \sum_{i=1}^n \rho^{\prime} \left(Y_i - \mathbf{B}^{\top}(x_i) \boldsymbol{\widehat{\beta}} \right) \mathbf{B}(x_i) + 2\lambda \mathbf{D} \boldsymbol{\widehat{\beta}} = \mathbf{0}.
\end{equation*}
The solution is unique for strictly convex $\rho$-functions but non-unique otherwise.

To include a preliminary scale estimate $\widehat{\sigma}$ it suffices to modify the $\rho$ function according to $\rho_{\widehat{\sigma}}(x) := \rho(x/\widehat{\sigma})$. Traditionally in robust statistics, see, e.g., \citet{Maronna:2006}, $\widehat{\sigma}$ is obtained by an initial robust regression fit to the data. This may be avoided by using the technique of pseudo-residuals as in \citet{Cunningham:1991}. Specifically, assuming $x_1<\ldots < x_n$, let 

\begin{align}
\widehat{\epsilon}_i = w_i Y_{i-1} + s_i Y_{i+1} - Y_i, \quad i= 2, \ldots, n-1
\end{align}
with
\begin{align*}
w_i  = (x_{i+1}-x_i)/(x_{i+1} - x_{i-1}) \quad \text{and} \quad s_i  = (x_i-x_{i-1})/(x_{i+1}-x_{i-1}).
\end{align*}
It is easy to see that the pseudo-residuals are constructed by using straight line fits on two outer observations in order to predict the middle observation. \citet{Gasser:1986} proposed estimating $\sigma^2$ in \eqref{eq:1} using 

\begin{equation*}
\widehat{\sigma}^2 = \frac{1}{n-2} \sum_{i=2}^{n-1} \frac{ \widehat{\epsilon}_i^2}{w_i^2+s_i^2+1},
\end{equation*}
where the standardization results from noticing that $\mathbb{E}\{|\widehat{\epsilon}_i|^2 \} = (w_i^2+s_i^2+1)\sigma^2 + O(n^{-2})$ for $f(\cdot)\in \mathcal{C}^2[a,b]$. The sample variance is not robust with respect to outliers but robust estimates can also be obtained from pseudo-residuals. For example, one may compute the median absolute deviation, an M-scale or the inter-quartile range (IQR), as suggested by \citet{Cunningham:1991} in the context of smoothing splines. Another class of robust scale estimators may be constructed using pairwise differences of $Y_{i}$. In particular, \cite{Boente:1997} propose estimating $\sigma$ with

\begin{equation*}
\widehat{\sigma} = (2^{1/2} 0.6745)^{-1} \median|Y_{i+1} - Y_{i}|,
\end{equation*}
which may be viewed as a robust alternative to the well-known Rice estimator.

It should be noted that contrary to unpenalized regression, such as regression splines, standardizing with a scale estimate will not, in general, lead to scale equivariant  estimates. This will be approximately the case, though, provided that the penalty term is negligible, that is, either $\lambda$ is small or $\int \{\widehat{f}^{(q)}\}^2$ is small, i.e., the estimating function is "close" to being a polynomial. Nevertheless, the inclusion of a robust scale estimate leads to useful diagnostic tools for outliers, see the real data examples of Section 6 for some interesting illustrations.

\subsection{Computation and smoothing parameter selection}

The success of any penalized spline estimator, least-squares and robust alike, rests on appropriate selection of the smoothing parameters: the dimension of the spline basis and the penalty parameter. Here, we shall assume that the order of the spline and the penalty has been fixed in advance by the practitioner, common choices being $p=4$ and $q=2$. First, we make a brief note on the computation of the penalized estimates.

The solution to \eqref{eq:7} may be computed efficiently by a modification of the well-known iteratively reweighted least squares (IRWLS) algorithm \citep{Maronna:2006}. Define $r_i(\boldsymbol{\beta}) = Y_i - \mathbf{B}^{\top} (x_i)\boldsymbol{\beta}$, let $W_{i}(\boldsymbol{\beta}) = \rho^{\prime}(r_i(\boldsymbol{\beta})/\widehat{\sigma})/(r_i(\boldsymbol{\beta})/\widehat{\sigma}), \ i = 1, \ldots, n$ and put $\mathbf{W}(\boldsymbol{\beta}) = \diag\{ W_i(\boldsymbol{\beta})\}$. It can be seen that $\widehat{\boldsymbol{\beta}}$ satisfies

\begin{equation}
- (n \widehat{\sigma}^2)^{-1}  \sum_{i=1}^n W_i(\widehat{\boldsymbol{\beta}})\left\{Y_i - \mathbf{B}^{\top}(x_i) \widehat{\boldsymbol{\beta}} \right\} \mathbf{B}(x_i) + 2 \lambda \mathbf{D} \widehat{\boldsymbol{\beta}} = \mathbf{0}.
\end{equation}
Thus $\widehat{\boldsymbol{\beta}}$ is the minimizer of a weighted penalized least-squares criterion. This suggests an iterative scheme for the computation of $\widehat{\boldsymbol{\beta}}$. At the $m$th step one defines weights $W_{i}(\boldsymbol{\beta}^{(m)}), \ i = 1, \ldots, n$ and obtains the updated approximation to $\widehat{\boldsymbol{\beta}}$, $\boldsymbol{\beta}^{(m+1)}$, by solving

\begin{equation*}
\left\{n^{-1}\mathbf{B}^{\top} \mathbf{W}(\boldsymbol{\beta}^{(m)})\mathbf{B} + 2 \widehat{\sigma}^2 \lambda \mathbf{D}  \right\} \boldsymbol{\beta}^{(m+1)} = n^{-1}\mathbf{B}^{\top} \mathbf{W}(\boldsymbol{\beta}^{(m)}) \mathbf{Y}.
\end{equation*}
It follows from arguments given in \citet{Maronna:2006, Huber:2009} that the procedure is guaranteed to converge to $\boldsymbol{\widehat{\beta}}$, independently of the starting point, provided that $\rho$ is convex, symmetric about zero and $\rho^{\prime}(x)/x$ is bounded and monotone decreasing for $x>0$. Omitting the convexity assumption has the consequence that the algorithm still converges but convergence may instead be to a local minimum. Thus, the choice of the starting point becomes crucial.

Due to the banded structure of the matrices involved, 	successive systems of linear equations can be solved in $O(K)$ computations, after forming the necessary matrices. This needs to be contrasted with the $O(n)$ computations that would have been required by smoothing splines with the B-spline basis. Since often $K<<n$, penalized spline estimators require much less computational effort, particularly for large datasets. Moreover, as \citet[Chapter 7]{Wahba:1990} notes, a ridge regression type argument shows that there always exists a $\lambda>0$ such that the mean-squared error of the corresponding penalized spline estimator is smaller than for $\lambda=0$, i.e., for the regression spline estimator. These facts illustrate the balance penalized splines seek to achieve.

To implement the estimator we follow the recommendation in \citet[Chapter 5]{Ruppert:2003} for the number and location of knots.  Specifically, we take 

\begin{equation}
K = \min\{1/4\ \times\ \text{number of unique}\ x_i, \ 40 \},
\end{equation}
and for the interior knots
\begin{equation}
t_{k} = \left( \frac{k+1}{K+2} \right)\text{th sample quantile of the unique $x_i$},
\end{equation}
for $k = 1, \ldots, K$. Both $K$ and the location of the knots are chosen independently of $\lambda$, as experience with penalized splines has shown that $\lambda$ is more important than $K$, provided that the latter quantity is taken large enough. To choose $\lambda$ we use the generalized cross-validation (GCV) criterion adapted from \citet{Cunningham:1991}, that is,

\begin{equation}
\GCV(\lambda) = \frac{n^{-1} \sum_{i=1}^n W_i(\widehat{\boldsymbol{\beta}}) \left\{  Y_i - \mathbf{B}^{\top}(x_i) \widehat{\boldsymbol{\beta}} \right\}^2 }{ \left\{1- n^{-1}\Tr\mathbf{H}(\lambda) \right\}^2  }
\end{equation}
where $\mathbf{H}(\lambda) = \mathbf{B} \{\mathbf{B}^{\top}\mathbf{W}(\boldsymbol{\widehat{\beta}}) \mathbf{B} + 2 n \lambda \widehat{\sigma}^2 \mathbf{D} \}^{-1}  \mathbf{B}^{\top} \mathbf{W}(\boldsymbol{\widehat{\beta}}) \mathbf{Y} $ is the hat matrix obtained upon convergence of the algorithm. We choose $\lambda$ as the minimizer of this function.

The minimization is usually carried out with a blind grid search leaving to the user the awkward specification of the candidate penalty values. In order to produce a fully automatic method we recommend using a numerical derivative-free optimizer such as the Nelder-Mead method \citep[238--240]{Nocedal:2006}. The method is available in standard software, converges fast and, in our experience, works well for a wide variety of problems. It is therefore our preferred choice for the simulation experiments and the real data analyses presented herein. 

\section{Asymptotic properties}

We now investigate the rates of convergence of the M-type penalized spline estimator defined in \eqref{eq:7}, both with and without the use of an auxiliary scale estimate. For the purpose of comparison, we first list the asymptotic mean-squared errors of regression and smoothing spline estimators under their respective assumptions. For either least-squares or M-type  regression spline estimates defined in \eqref{eq:3}-\eqref{eq:4}, denoted generically by $\widehat{f}_{\text{rsp}}(\cdot)$, one has

\begin{equation}
\label{eq:14}
\frac{1}{n} \sum_{i=1}^n \{\widehat{f}_{\text{rsp}}(x_i) - f(x_i) \}^2 = O_{P}\left(\frac{K}{n}\right) + O_{P}\left(K^{-2p}\right),
\end{equation}
for $f(\cdot)\in \mathcal{C}^{p}[a,b]$, see \citet{Shi:1995}. On the other hand, for least-squares and M-type smoothing splines, denoted generically by $\widehat{f}_{\text{smsp}}(\cdot)$, one has
\begin{equation}
\label{eq:15}
\frac{1}{n} \sum_{i=1}^n \{\widehat{f}_{\text{smsp}}(x_i) - f(x_i) \}^2 = O_{P}\left(\frac{1}{n \lambda^{1/2q}}\right) + O_{P}\left(\lambda\right),
\end{equation}
for $f(\cdot) \in \mathcal{W}^{q, 2}([a,b])$, as seen from the results of \citet{Wahba:1990} and \citet{Cox:1983}. It follows from these results that with appropriate selection of the smoothing parameters $K$ and $\lambda$, $\widehat{f}_{\text{rsp}}(\cdot)$ and $\widehat{f}_{\text{smsp}}(\cdot)$ can attain the optimal rates of convergence for $\mathcal{C}^{p}([a,b])$ and $\mathcal{W}^{q, 2}([a,b])$ functions respectively \citep{Stone:1982}. Since $\mathcal{C}^{q}([a,b]) \subset \mathcal{W}^{q, 2}([a,b])$ smoothing splines require somewhat milder smoothness assumptions in order to attain the same rates of convergence. 

It should be noted that the above results cannot be directly extended to M-type penalized splines since regression splines do not take into account the effect of the penalization while smoothing splines ignore the approximation error incurred by the sieved nature of penalized splines. An independent treatment is thus required. The assumptions that will be needed for our theoretical development are as follows.

\begin{itemize}
\item[A.1]For the unique knots $\{t_i\}_{i = p}^{K+p}$ define $h_i := t_i-t_{i-1}$ and $h := \max_i h_i$. Assume that $\max_i|h_{i+1}-h_i| = o(K^{-1})$ and there exists a constant $M$ such that $(h/\min_i h_i) \leq M$.
\item[A.2] For deterministic design points $x_i \in [a,b], \ i =1, \ldots, n$, assume that there exists a distribution function $Q$ with corresponding continuous density $w$ bounded away from zero and infinity such that, with $Q_{n}$ the empirical distribution of $x_1, \ldots, x_n, \ \sup_{x} |Q_n(x)-Q(x)| = o(K^{-1})$. 
\item[A.3] The number of knots $K=o(n)$.
\end{itemize}
Assumption 1 essentially requires that the knots are quasi-uniform and dense in $[a,b]$.  Assumption 2 is a weak restriction on the knot distribution and finally, assumption 3 puts a limit to the rate of growth of the knots, that is, the number of predictor variables in the regression model. This is a common assumption for sieved estimators, see \citet{Eubank:1999} and \citet{Eg:2009}. All three assumptions are also used for the least-squares setting. For the M-type estimators considered herein we require the following additional assumption on the $\rho$-function and the distribution of $\epsilon$. 

\begin{itemize}
\item[A.4] For $\psi(\cdot) := \rho^{\prime}(\cdot)$ we require $\psi(\cdot) \in \mathcal{C}^2(-\infty, \infty)$ and satisfies $\sup_{x} |\psi^{\prime \prime}(x)| < \infty$, $\mathbb{E} \{\psi(\epsilon)\} = 0$, $\mathbb{E}\{ \psi^{\prime}(\epsilon)\} > 0$, $\mathbb{E} \{ |\psi(\epsilon)|^2 \} <\infty$ and $ \mathbb{E}\{ |\psi^{\prime}(\epsilon)|^2\} < \infty$.
\end{itemize}
Examples of $\rho$-functions that satisfy the smoothness conditions are the convex logistic and the non-convex Tukey bisquare. The Huber $\rho$-function does not meet these requirements but smoothed, yet asymptotically equivalent versions of it do, see \citet{Hampel:2011a} for a possible smoothing scheme. As \citet{Huber:1973} notes, the smoothness conditions on $\rho$ are technically convenient but seem hardly essential for the results to hold. The moment conditions involving the $\psi$-function occur very often in the context of robust estimation, see \citet{Maronna:2006}. In essence, they are  identifiability (Fisher-consistency) conditions so that the correct function $f(\cdot)$ is estimated.

Following \citet{Huber:1973} and \citet{Cox:1983} we aim to approximate the M-type penalized spline estimator with a sequence of special least-squares estimators. To that end, let us define the pseudo-observations

\begin{equation}
\label{eq:16}
\widetilde{Y}_i = f(x_i) + \frac{\psi(\epsilon_i)}{\mathbb{E} \{\psi^{\prime}(\epsilon)\} },
\end{equation}
and let $\widetilde{f}(\cdot) := \mathbf{B}^{\top}(\cdot) \widetilde{\boldsymbol{\beta}}$ be the minimizer of 

\begin{equation*}
\frac{1}{n}\sum_{i=1}^n  \left\{ \widetilde{Y}_i - \sum_{j \leq K+p} \beta_j B_{K, j}(x_i) \right\}^2 + \frac{\lambda}{\mathbb{E} \{\psi^{\prime}(\epsilon)\}} \int_{[a,b]} \left[ \left\{ \sum_{j \leq K+p} \beta_j B_{K, j}(t) \right\}^{(q)} \right]^2 dt .
\end{equation*}
Motivation for the use of pseudo-observations may be found in the linearization of univariate M-estimators that is achieved with the help of the influence function \citep{Maronna:2006}. Note that by A.4 the $\psi(\epsilon_i)$ have mean zero and finite squared expectation. Thus, Theorem 1 of \citet{Claeskens:2009} applies for this theoretical least-squares estimator. With the notation of Section 3, let us finally define the estimating equations

\begin{align}
\boldsymbol{\Phi}(\boldsymbol{\beta}) & = -\frac{1}{n} \sum_{i=1}^n \psi \left(Y_i - \mathbf{B}^{\top}(x_i) \boldsymbol{\beta}  \right) \mathbf{B}(x_i) + 2\lambda \mathbf{D} \boldsymbol{\beta} \\ 
\boldsymbol{\Psi}(\boldsymbol{\beta}) &= -\frac{1}{n}\sum_{i=1}^n \{\widetilde{Y}_i - \mathbf{B}^{\top} (x_i)\boldsymbol{\beta} \}\mathbf{B}(x_i)+ \frac{2\lambda}{ \mathbb{E} \{\psi^{\prime}(\epsilon)\}}  \mathbf{D} \boldsymbol{\beta}  .
\end{align}
The solution to \eqref{eq:7} is a zero of $\boldsymbol{\Phi}$. The zero of $\boldsymbol{\Psi}$, $\widetilde{f}(\cdot) := \sum_j \widetilde{\beta}_j B_{K,j}(\cdot)$, does not correspond to a real estimator, but its implied theoretical properties help in establishing the rates of convergence of $\widehat{f}(\cdot)$ with respect to the semi-norm $||g||_n^2 := n^{-1} \sum_{i=1} |g	(x_i)|^2$. 

For notational simplicity, dependence on $n$ is, in general, suppressed whenever possible. Further, for reasons of convenience both here and in our proofs we also identify each spline function $s(\cdot) = \sum_j \beta_j B_{K,j}(\cdot)$ with its coefficient vector $\boldsymbol{\beta}$. This comes without confusion as all finite-dimensional spaces are isomorphic to Euclidean spaces of equal dimension.

\begin{theorem}
\label{Thm:1}
Let $\widehat{f}(\cdot) = \mathbf{B}^{\top}(\cdot) \widehat{\boldsymbol{\beta}}$ denote a solution of $\boldsymbol{\Phi}(\boldsymbol{\beta}) = \mathbf{0}$. Assume A.1-A.4 and write $C_n := \mathbb{E}\{|| \widetilde{f} - f||_n^2\}$. Then for any $\delta>0$ there exists $n_0$ such that for all $n \geq n_0$

\begin{equation*}
\Pr\left[\text{there is a solution} \ \widehat{f}(\cdot) \ \text{to} \  \boldsymbol{\Phi}(\boldsymbol{\beta}) = \mathbf{0} \ \text{satisfying} \ ||\widehat{f} - \widetilde{f}||_n^2 \leq \delta C_n   \right] \geq 1-\delta.
\end{equation*}
Equivalently, there exists a sequence of M-type penalized spline estimates $\widehat{f}_n(\cdot)$ such that

\begin{equation*}
||\widehat{f} - \widetilde{f}||_n^2/C_n \xrightarrow{P} 0.
\end{equation*}

\end{theorem}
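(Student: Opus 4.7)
My plan is to establish existence of a zero of $\boldsymbol{\Phi}$ within a small $V_n$-neighborhood of $\widetilde{\boldsymbol{\beta}}$ via a Brouwer-type continuity argument on the spline coefficient space, in the spirit of \citet{Huber:1973} and \citet{Cox:1983}. The central idea is to linearize $\boldsymbol{\Phi}$ around $\widetilde{\boldsymbol{\beta}}$ and compare it to $\mathbb{E}\{\psi^{\prime}(\epsilon)\}\boldsymbol{\Psi}$, exploiting both that $\boldsymbol{\Psi}(\widetilde{\boldsymbol{\beta}})=\mathbf{0}$ by construction and the rate $C_n$ supplied for $\widetilde{f}$ by Theorem~1 of \citet{Claeskens:2009}. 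All bookkeeping would be done with respect to the quadratic form $V_n := n^{-1}\mathbf{B}^{\top}\mathbf{B} + (2\lambda/\mathbb{E}\{\psi^{\prime}(\epsilon)\})\mathbf{D}$ and its induced norm $\|\mathbf{u}\|_{V_n}^2 := \mathbf{u}^{\top}V_n\mathbf{u}$, which dominates the design semi-norm $\|\mathbf{B}^{\top}(\cdot)\mathbf{u}\|_n^2$ on the spline space and is precisely the Hessian of the linearized problem.

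The first step is the Taylor expansion $\boldsymbol{\Phi}(\widetilde{\boldsymbol{\beta}}+\mathbf{u})=\boldsymbol{\Phi}(\widetilde{\boldsymbol{\beta}})+\nabla\boldsymbol{\Phi}(\widetilde{\boldsymbol{\beta}})\mathbf{u}+\mathbf{R}(\mathbf{u})$, legitimate under A.4, after which I would take inner product with $\mathbf{u}$ and bound each of the three resulting pieces on the sphere $\{\|\mathbf{u}\|_{V_n}^2=\delta C_n\}$. A second expansion of $\psi(\epsilon_i+[f(x_i)-\widetilde{f}(x_i)])$ about $\epsilon_i$, combined with the definition of $\widetilde{Y}_i$ and $\boldsymbol{\Psi}(\widetilde{\boldsymbol{\beta}})=\mathbf{0}$, decomposes $\boldsymbol{\Phi}(\widetilde{\boldsymbol{\beta}})$ into a mean-zero stochastic piece built from $[\psi^{\prime}(\epsilon_i)-\mathbb{E}\psi^{\prime}(\epsilon)][f(x_i)-\widetilde{f}(x_i)]\mathbf{B}(x_i)$ and a deterministic bias piece bounded by $\sup_x|\psi^{\prime\prime}(x)|\cdot n^{-1}\sum_i[f(x_i)-\widetilde{f}(x_i)]^2\mathbf{B}(x_i)$; after Cauchy--Schwarz and Markov using $\mathbb{E}\|\widetilde{f}-f\|_n^2=C_n$, both contribute $o_P(\|\mathbf{u}\|_{V_n}^2)$ along the sphere. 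For the linear term, a similar Taylor expansion of $\psi^{\prime}$ together with an LLN-type concentration and $\sup|\psi^{\prime\prime}|<\infty$ would yield the coercivity estimate $\mathbf{u}^{\top}\nabla\boldsymbol{\Phi}(\widetilde{\boldsymbol{\beta}})\mathbf{u}\geq \tfrac{1}{2}\mathbb{E}\{\psi^{\prime}(\epsilon)\}\|\mathbf{u}\|_{V_n}^2$ uniformly with high probability. Finally, the cubic remainder is bounded via $|\mathbf{u}^{\top}\mathbf{R}(\mathbf{u})|\leq \tfrac{1}{2}\sup|\psi^{\prime\prime}|\cdot n^{-1}\sum_i|\mathbf{B}^{\top}(x_i)\mathbf{u}|^3$, the B-spline partition-of-unity property, and the Gram-matrix equivalence $n^{-1}\mathbf{B}^{\top}\mathbf{B}\asymp K^{-1}\mathbf{I}$ (standard under A.1--A.2, cf.\ \citealt{DB:2001}), giving $|\mathbf{u}^{\top}\mathbf{R}(\mathbf{u})|=O(\sqrt{K}\|\mathbf{u}\|_{V_n}^3)=o(\|\mathbf{u}\|_{V_n}^2)$ provided $KC_n\to 0$, which follows from A.3 and the Claeskens et al.\ rate. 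Collecting the three estimates, for $\delta$ small enough we have $\mathbf{u}^{\top}\boldsymbol{\Phi}(\widetilde{\boldsymbol{\beta}}+\mathbf{u})>0$ uniformly on the $V_n$-sphere with probability at least $1-\delta$, and the standard Brouwer-type consequence -- a continuous vector field positively oriented along a sphere vanishes somewhere inside -- produces the desired zero of $\boldsymbol{\Phi}$ with $\|\widehat{f}-\widetilde{f}\|_n^2\leq \|\widehat{\mathbf{u}}\|_{V_n}^2\leq \delta C_n$. The equivalent convergence-in-probability statement follows by letting $\delta\downarrow 0$.

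The main obstacle will be upgrading pointwise control of $\nabla\boldsymbol{\Phi}(\widetilde{\boldsymbol{\beta}})$ to a uniform quadratic-form lower bound on the $V_n$-unit ball in $\mathbb{R}^{K+p}$: although $\mathbb{E}\nabla\boldsymbol{\Phi}(\widetilde{\boldsymbol{\beta}})$ is close to $\mathbb{E}\{\psi^{\prime}(\epsilon)\}V_n$, obtaining uniform coercivity typically requires either an $\varepsilon$-net argument whose cardinality grows with $K$ or a direct operator-norm concentration inequality exploiting the moment conditions in A.4. A secondary technical point is keeping the cubic Taylor remainder negligible against the quadratic term, which is where the slow knot growth in A.3 pays off and where the delicate balance between $K$, $\lambda$ and the smoothness of $f$ is felt. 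Once these two ingredients are in hand, the mean-zero noise bound, the bias bound via $\sup|\psi^{\prime\prime}|$, and the final Brouwer step are largely routine.
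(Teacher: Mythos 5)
Your overall strategy---approximate $\boldsymbol{\Phi}$ by the affine map $\mathbb{E}\{\psi^{\prime}(\epsilon)\}\boldsymbol{\Psi}$, control the stochastic term built from $\psi^{\prime}(\epsilon_i)-\mathbb{E}\{\psi^{\prime}(\epsilon)\}$ and the bias term built from $\sup_x|\psi^{\prime\prime}(x)|$, and invoke a Brouwer-type existence argument---is the same skeleton as the paper's, and your $T_1$/$T_2$-style decomposition and leverage bookkeeping via $\max_i \mathbf{B}^{\top}(x_i)\mathbf{G}^{-1}\mathbf{B}(x_i)$ match Lemma 2 and the displays around \eqref{eq:21} and \eqref{eq:27}. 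However, the specific Brouwer formulation you chose creates a gap that the paper's formulation is designed to avoid. You require the uniform coercivity estimate $\mathbf{u}^{\top}\nabla\boldsymbol{\Phi}(\widetilde{\boldsymbol{\beta}})\mathbf{u}\geq \tfrac{1}{2}\mathbb{E}\{\psi^{\prime}(\epsilon)\}\|\mathbf{u}\|_{V_n}^2$, and you correctly flag it as the main obstacle---but you never resolve it, and under A.4 it is not routine: $\rho$ may be non-convex, so the weights $\psi^{\prime}(\epsilon_i+\Delta_i)$ can be negative, and a uniform lower bound over all directions requires an operator-norm concentration of $n^{-1}\sum_i\{\psi^{\prime}(\epsilon_i)-\mathbb{E}\psi^{\prime}(\epsilon)\}\mathbf{B}(x_i)\mathbf{B}^{\top}(x_i)$ at scale $\lambda_{\min}(n^{-1}\mathbf{B}^{\top}\mathbf{B})\asymp K^{-1}\to 0$, which a pointwise second-moment bound does not deliver. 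The paper sidesteps this entirely: it defines the fixed-point map $\boldsymbol{U}(\boldsymbol{\beta})=\boldsymbol{\beta}-\mathbf{G}^{-1}\boldsymbol{\Phi}(\boldsymbol{\beta}+\boldsymbol{\beta}^{\star})/\mathbb{E}\{\psi^{\prime}(\epsilon)\}$ and shows it maps the compact convex set $F_n-f^{\star}$ into itself, so the deterministic matrix $\mathbf{G}$ plays the role of the Hessian by construction and only the \emph{difference} $\boldsymbol{\Phi}/\mathbb{E}\{\psi^{\prime}(\epsilon)\}-\boldsymbol{\Psi}$, measured in the norm $\|\boldsymbol{\beta}\|_{\mathbf{G}}^2=n^{-1}\|\mathbf{B}\mathbf{G}^{-1}\boldsymbol{\beta}\|_E^2$, has to be small; no lower bound on a random matrix is ever needed. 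If you want to keep your sphere argument you must either add the convexity of $\rho$ as an assumption or supply the $\varepsilon$-net/matrix-concentration step you allude to.

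Two secondary points. First, you center the expansion at $\widetilde{\boldsymbol{\beta}}$, which is itself a function of $\psi(\epsilon_1),\ldots,\psi(\epsilon_n)$; consequently your ``mean-zero stochastic piece'' $n^{-1}\sum_i\{\psi^{\prime}(\epsilon_i)-\mathbb{E}\psi^{\prime}(\epsilon)\}\{f(x_i)-\widetilde{f}(x_i)\}\mathbf{B}(x_i)$ is \emph{not} mean zero, and the variance computation in \eqref{eq:21} is not directly licensed. The paper avoids this by centering the neighbourhood $F_n$ at the deterministic spline approximant $f^{\star}$ from Lemma 1 and exploiting the exact identity $\boldsymbol{\Psi}(\boldsymbol{\beta})=\mathbf{G}(\boldsymbol{\beta}-\widetilde{\boldsymbol{\beta}})$, so that $\widetilde{\boldsymbol{\beta}}$ only enters through quantities already controlled by $C_n$. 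Second, your condition $KC_n\to 0$ does not follow from A.3 alone; it is exactly the leverage condition of Lemma 2(ii), and securing it is the reason Theorem 2 carries the extra requirements $p>1$ and $\lim_n n\lambda^{(2q+1)/2q}K^{2q-1}=\lim_n\lambda K^{4q-1}=\infty$, so you should cite those rather than A.3. With those three repairs your route would go through, but as written the coercivity step is a genuine missing ingredient rather than a routine detail.
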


The theorem states that with high probability there exists an M-type penalized spline estimate $\widehat{f}(\cdot)$ such that $\widehat{f}(\cdot)$ and $ \widetilde{f}(\cdot)$ will be much closer than $\widetilde{f}(\cdot)$ and $f(\cdot)$. Theorem \ref{Thm:1} further establishes a useful representation of M-estimators: in a certain sense, M-estimators are equivalent to least-squares estimators applied on the pseudo-observations given in \eqref{eq:16}. This illustrates how different $\rho$-functions operate on the error term: large errors will be either trimmed or discarded based on whether $\rho$ is convex or non-convex with finite rejection point, \citep{Hampel:2011b}.

Additionally, since

\begin{align*}
||\widehat{f} - f ||_n^2  \leq 2 ||\widehat{f} - \widetilde{f} ||_n^2 + 2||\widetilde{f} - f||_n^2 = o_P(C_n) + O_P(C_n)  = O_P(C_n),
\end{align*}
the conclusion is that $\widehat{f}(\cdot)$ will enjoy the same rates of convergence as the least squares estimator $\widetilde{f}(\cdot)$. In particular, let 

\begin{equation*}
K_{q, n} = \widetilde{c}_1^{1/2q}(K_n+p-q)\lambda_n^{1/2q} ,
\end{equation*}
where the constant $\widetilde{c}_1$ is defined in Lemma A3 of \citet{Claeskens:2009} and depends only on $q$ and the design density $w(\cdot)$. The sequence $K_{q, n}$ determines the order of the asymptotic mean squared error, as Theorem \ref{Thm:2} shows.

\begin{theorem}
\label{Thm:2}
Under assumptions A.1-A.4 the following statements hold
\item [(a)] If $K_{q,n}<1$ eventually and $f(\cdot) \in \mathcal{C}^{p}[a,b]$ for $p> 1$, then there exists a sequence of penalized spline M-estimates $\widehat{f}_n(\cdot)$ satisfying
\begin{equation*}
||\widehat{f}_n - f||_n^2 = O_P\left(\frac{K}{n}\right) + O_P \left( \lambda^2 K^{2q} \right) + O_P \left( K^{-2p} \right)
\end{equation*}
\item [(b)] If $K_{q,n} \geq 1$ eventually, $f(\cdot) \in \mathcal{C}^{q}[a,b]$ and $\lim_n n \lambda^{(2q+1)/2q}K^{2q-1} = \lim_n \lambda K^{4q-1} = \infty$, then there exists a sequence of penalized spline M-estimates $\widehat{f}_n(\cdot)$ satisfying 
\begin{equation*}
||\widehat{f}_n- f||_n^2 = O_P\left(\frac{1}{n \lambda^{1/2q}}\right) + O_P \left( \lambda \right) + O_P \left( K^{-2q} \right)
\end{equation*}
\end{theorem}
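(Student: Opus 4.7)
\medskip

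\textbf{Proof plan.} My approach is to reduce Theorem~\ref{Thm:2} entirely to the least-squares rates obtained by \citet{Claeskens:2009}, exploiting the linearization provided by Theorem~\ref{Thm:1}. The first step is the triangle inequality
\begin{equation*}
\|\widehat{f}-f\|_n^2 \leq 2\|\widehat{f}-\widetilde{f}\|_n^2 + 2\|\widetilde{f}-f\|_n^2,
\end{equation*}
which in combination with Theorem~\ref{Thm:1} shows that it suffices to control the rate of the auxiliary estimator $\widetilde{f}(\cdot)$, and moreover justifies the use of the existence clause of Theorem~\ref{Thm:1} so that the sequence $\widehat{f}_n$ is actually available.

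The second step is to recognize $\widetilde{f}(\cdot)$ as the ordinary least-squares penalized B-spline estimator built from the pseudo-observations $\widetilde{Y}_i=f(x_i)+\psi(\epsilon_i)/\mathbb{E}\{\psi^{\prime}(\epsilon)\}$ with rescaled penalty parameter $\lambda/\mathbb{E}\{\psi^{\prime}(\epsilon)\}$. Assumption A.4 guarantees that the pseudo-errors are i.i.d., zero mean, with finite variance $\mathbb{E}\{\psi(\epsilon)^2\}/[\mathbb{E}\{\psi^{\prime}(\epsilon)\}]^2$, while $\mathbb{E}\{\psi^{\prime}(\epsilon)\}$ is a fixed positive constant so that the rescaled penalty has the same order as $\lambda$. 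Together with A.1--A.3 on the knots and the design density, this places $\widetilde{f}$ exactly in the scope of Theorem~1 of \citet{Claeskens:2009}.

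The third step is to invoke that result in the two regimes defined by $K_{q,n}$. For part~(a), with $f\in\mathcal{C}^p[a,b]$ and $K_{q,n}<1$ eventually, the knot spacing dominates the penalty and \citet{Claeskens:2009} yields
\begin{equation*}
\|\widetilde{f}-f\|_n^2 = O_P(K/n) + O_P(\lambda^2 K^{2q}) + O_P(K^{-2p}),
\end{equation*}
the three summands being the stochastic variance, the penalty-induced shrinkage bias, and the B-spline approximation error on $\mathcal{C}^p[a,b]$. For part~(b), with $f\in\mathcal{C}^q[a,b]$, $K_{q,n}\geq 1$ eventually, and the growth conditions $n\lambda^{(2q+1)/2q}K^{2q-1}\to\infty$, $\lambda K^{4q-1}\to\infty$, the penalty dominates and the same theorem produces
\begin{equation*}
\|\widetilde{f}-f\|_n^2 = O_P(1/(n\lambda^{1/2q})) + O_P(\lambda) + O_P(K^{-2q}).
\end{equation*}
In each regime $C_n=\mathbb{E}\{\|\widetilde{f}-f\|_n^2\}$ inherits the order of the corresponding sum, so Theorem~\ref{Thm:1} gives $\|\widehat{f}-\widetilde{f}\|_n^2=o_P(C_n)$ and the triangle inequality above closes the argument.

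The only step that is not pure bookkeeping is the verification that $C_n$, defined through an expectation, genuinely matches the $O_P$-rates of \citet{Claeskens:2009} rather than being strictly larger due to possible heavy tails of $\|\widetilde{f}-f\|_n^2$. This requires a second-moment strengthening of their variance-bias decomposition, but it is essentially immediate given the finite second moment of $\psi(\epsilon)$ from A.4, together with the uniform spectral bounds on $n^{-1}\mathbf{B}^{\top}\mathbf{B}$ and $\mathbf{D}$ produced by the quasi-uniform knot assumption A.1 and the design density assumption A.2. Modulo this verification, which is implicit in the proof of their Theorem~1, all the genuine work has already been carried out in Theorem~\ref{Thm:1}.
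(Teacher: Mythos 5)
Your proposal is correct and follows essentially the same route as the paper: the rates are obtained by combining the triangle inequality $\|\widehat{f}-f\|_n^2 \leq 2\|\widehat{f}-\widetilde{f}\|_n^2 + 2\|\widetilde{f}-f\|_n^2$ with the linearization $\|\widehat{f}-\widetilde{f}\|_n^2 = o_P(C_n)$ from Theorem~\ref{Thm:1} and the application of Theorem~1 of \citet{Claeskens:2009} to the least-squares estimator built from the pseudo-observations, whose errors $\psi(\epsilon_i)/\mathbb{E}\{\psi^{\prime}(\epsilon)\}$ have mean zero and finite variance by A.4. Your closing remark on matching $C_n$ to the $O_P$-rates is exactly the point the paper handles by noting that the approximation error is already included in $C_n$ via that same theorem, so nothing further is needed.
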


Theorem \ref{Thm:2} establishes the least-squares mean-squared errors for penalized M-estimators without requiring any moments of the error term. Thus, while infinite error variance would make least-squares estimators inconsistent, M-estimators with bounded $\psi$ functions still maintain their consistency. It should be noted that the conditions $p> 1$ as well as $\lim_n n \lambda^{(2q+1)/2q}K^{2q-1} = \lim_n \lambda K^{4q-1} = \infty$ (in parts (a) and (b), respectively) are purely technical and  are not needed by \citet{Claeskens:2009}. They are important for the M-type estimators because they allow us to control the leverages, which are important quantities in the asymptotics of robust regression estimators with a diverging number of parameters, see \citet{Huber:1973, Yohai:1979, Cox:1983} and the proof of Theorem \ref{Thm:1}.

The rates are extremely interesting because they illustrate the compromise between regression and smoothing splines depending on how fast the number of knots grows, which is essentially what the condition on $K_{q,n}$ entails. In the first case, we have an asymptotic scenario that is similar to that of robust regression splines, cfr. \eqref{eq:14}. The additional term $\lambda^2 K^{2q}$ reflects the shrinkage bias from the penalty parameter and is negligible for a small number of knots. On the other hand, with a larger rate of growth for the number of knots one is led to an asymptotic scenario that is very similar to that of least-squares or M-type smoothing splines, cfr. \eqref{eq:15}. The additional term $K^{-2q}$ is the result of the error of approximation for a $\mathcal{C}^{q}[a,b]$ function by a spline, see the Jackson-type inequality in \citep[Chapter XII]{DB:2001}.  As discussed previously, smoothing splines incur no approximation error as they are exact solutions to the minimization problem \eqref{eq:6}.

Balancing all the MSE components in case (a), by setting $K_n \asymp n^{1/(2p+1)}$ and $\lambda \asymp n^{-\gamma}$ where $\gamma> (p+q)/(2p+1)$, yields $||\widehat{f} - f||_n^2 = O_P(n^{-2p/(2p+1)})$, which is the optimal rate of convergence for $f(\cdot) \in \mathcal{C}^{p}[a,b]$. Similarly, taking $K_n \asymp n^v$, with $v \geq 1/(2q+1)$, and $\lambda_n \asymp n^{-2q/(2q+1)}$ in case (b) yields $|| \widehat{f} - f||_n^2 = O_P(n^{-2q/(2q+1)})$, the optimal rate of convergence for $f(\cdot) \in \mathcal{C}^{q}[a,b]$ \citep{Stone:1982}. Since we have assumed that $ p >q $, the rates of convergence in case (a) are faster than in case (b). As \citet{Claeskens:2009} remark, this fact provides justification for selecting a small number of knots relative to $n$ for penalized spline estimators. Theorem \ref{Thm:2} only asserts the existence of a "good" sequence of estimates. Naturally, this may be strengthened to both existence and uniqueness if one uses a strictly convex $\rho$-function.

The empirical norm $||\cdot||_n$ depends on $x_i$ so that the question arises of whether it is possible to obtain rates of convergence with respect to a global measure. Corollary 1 shows that this is indeed possible with respect to the standard $\mathcal{L}^2([a,b])$ metric denoted by $||\cdot||$. Rates of convergence for the derivatives in the $\mathcal{L}^2([a,b])$ metric may also be obtained.

\begin{cor}
\label{Cor:1}
Assume A.1-A.4 and let $\widehat{f}_{n}(\cdot)$ denote the sequence of penalized M-estimates of Theorem \ref{Thm:1}. Then, if $f \in \mathcal{C}^{p}[a,b], K \asymp n^{1/(2p+1)}$ and $\lambda \asymp n^{-\gamma}$ for $\gamma > (p+q)/(2p+1)$,
\begin{equation*}
||\widehat{f}^{(j)}-f^{(j)}||^2 =  O_P(n^{-(2p-j)/(2p+1)}),
\end{equation*}
while if $f \in \mathcal{C}^{q}[0,1],\ K \asymp  n^{v}$ with $v \geq 1/(2q+1)$ and $\lambda \asymp  n^{-2q/{(2q+1)}}$,
\begin{equation*}
||\widehat{f}^{(j)}-f^{(j)}||^2 =  O_P(n^{-(2q-j)/(2q+1)}).
\end{equation*}
\end{cor}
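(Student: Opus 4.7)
The plan is to transfer the empirical-norm bound of Theorem \ref{Thm:2} to the $\mathcal{L}^2([a,b])$ norm and its derivative counterparts by passing through a best spline approximation $f^{\star}\in S_{K}^p$ to $f$, and invoking two standard deterministic tools: an equivalence of the empirical and $\mathcal{L}^2$ inner products restricted to $S_K^p$, and an inverse (Bernstein-type) inequality that trades derivatives for factors of $K$. The approximation spline may be taken to be, for instance, the de Boor quasi-interpolant; under A.1 the classical Jackson-type estimate in \citet[Chapter XII]{DB:2001} gives $\|f^{(i)} - (f^{\star})^{(i)}\|_\infty = O(K^{-(p-i)})$ for all $0 \leq i \leq p-1$ when $f \in \mathcal{C}^p[a,b]$, and analogously with $p$ replaced by $q$ in case (b).

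The triangle-inequality decomposition
\begin{equation*}
\|\widehat{f}^{(j)} - f^{(j)}\|^2 \;\leq\; 2\,\|\widehat{f}^{(j)} - (f^{\star})^{(j)}\|^2 \;+\; 2\,\|(f^{\star})^{(j)} - f^{(j)}\|^2
\end{equation*}
reduces the problem to controlling the first, spline-valued summand. Since $\widehat{f} - f^{\star} \in S_K^p$, the Bernstein-type inverse inequality for splines on quasi-uniform meshes, which is valid under the bounded mesh ratio in A.1, yields $\|\widehat{f}^{(j)} - (f^{\star})^{(j)}\|^2 \lesssim K^{2j} \|\widehat{f} - f^{\star}\|^2$. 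To pass from $\|\cdot\|$ back to the empirical norm on $S_K^p$, I would use the fact that under A.1 and A.2 the Gram matrices $n^{-1}\mathbf{B}^{\top}\mathbf{B}$ and $\int \mathbf{B}(x)\mathbf{B}^{\top}(x) w(x)\,dx$ have eigenvalues of the same order in $K$ and differ only by $o(K^{-1})$ in operator norm, which gives $\|s\|^2 \asymp \|s\|_n^2$ uniformly for $s \in S_K^p$. Applying this with $s = \widehat{f} - f^{\star}$ and the triangle inequality,
\begin{equation*}
\|\widehat{f} - f^{\star}\|^2 \;\lesssim\; \|\widehat{f} - f^{\star}\|_n^2 \;\leq\; 2\,\|\widehat{f} - f\|_n^2 + 2\,\|f^{\star} - f\|_n^2 .
\end{equation*}

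The last term is $O(K^{-2p})$ by the $L^\infty$ approximation bound, while $\|\widehat{f} - f\|_n^2$ is controlled by Theorem \ref{Thm:2} at the prescribed rates of $K$ and $\lambda$. Plugging into the Bernstein step picks up a factor $K^{2j}$; adding back the deterministic approximation remainder $\|(f^{\star})^{(j)} - f^{(j)}\|^2 = O(K^{-2(p-j)})$ in case (a) and $O(K^{-2(q-j)})$ in case (b), and substituting $K \asymp n^{1/(2p+1)}$ (respectively $K \asymp n^{v}$ with $v \geq 1/(2q+1)$) and the prescribed $\lambda$, produces the advertised convergence rates.

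The main technical obstacle is the uniform norm equivalence $\|s\|_n^2 \asymp \|s\|^2$ on $S_K^p$ with $K$ growing in $n$. It requires combining the local-support property of the B-spline basis with the Glivenko--Cantelli-type condition $\sup_x |Q_n(x)-Q(x)| = o(K^{-1})$ in A.2 to show that the eigenvalues of $n^{-1}\mathbf{B}^{\top}\mathbf{B}$ are sandwiched between constant multiples of $K^{-1}$, and that replacing the design with its limiting weighted measure introduces only a lower-order perturbation at the relevant scale $K^{-1}$. The inverse inequality itself is purely deterministic and follows from the local Bernstein inequality on each subinterval together with A.1, so once the norm equivalence is in place the corollary reduces to plugging Theorem \ref{Thm:2} into the chain above.
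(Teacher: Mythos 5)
Your argument is essentially the paper's: the published proof consists of citing Lemmas 8 and 9 of \citet{Stone:1985}, which are precisely the two tools you reconstruct by hand --- the uniform equivalence of the empirical and $\mathcal{L}^2$ norms over $S_{K}^{p}$ under the bounded design density of A.2 and the growth restriction on $K$, and the Bernstein-type bound trading derivatives for powers of $K$ --- combined with the Jackson-type approximation and the rates of Theorem \ref{Thm:2}, so your route is the same one written out in full rather than outsourced. One remark: your chain delivers $\|\widehat{f}^{(j)}-f^{(j)}\|^2 = O_P(n^{-2(p-j)/(2p+1)})$ (respectively $O_P(n^{-2(q-j)/(2q+1)})$), which is exactly the optimal rate of \citet{Stone:1982} that the paper appeals to; the exponent $-(2p-j)/(2p+1)$ displayed in the corollary would be strictly faster than the minimax rate for $j\geq 1$ and is evidently a typographical slip, so the rate your argument produces is the correct one.
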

Corollary \ref{Cor:1} shows that, as one may expect, higher-order derivatives are more difficult to estimate. Unfortunately, this cannot be improved as these rates of convergence are optimal under the weak assumptions of \citet{Stone:1982}.

We now turn to the problem of penalized spline M-estimation with auxiliary scale estimation, such as the IQR of the pseudo-residuals discussed in Section 3. The estimating equations are now modified to 

\begin{align}
\boldsymbol{\Phi}_{\widehat{\sigma}}(\boldsymbol{\beta}) & = -\frac{1}{n} \sum_{i=1}^n \psi \left( \frac{Y_i - \mathbf{B}^{\top}(x_i) \boldsymbol{\beta}}{\widehat{\sigma}}  \right) \frac{\mathbf{B}(x_i)}{\widehat{\sigma}} + 2\lambda \mathbf{D} \boldsymbol{\beta} 
\\ \label{eq:20} \boldsymbol{\Psi}_{\sigma}(\boldsymbol{\beta}) &= -\frac{1}{n}\sum_{i=1}^n \left(\widetilde{Y}_i - \mathbf{B}^{\top}(x_i) \boldsymbol{\beta} \right)\mathbf{B}(x_i)+ 2 \lambda  \frac{\sigma^2}{\mathbb{E} \{\psi^{\prime}(\epsilon/\sigma)\} }  \mathbf{D} \boldsymbol{\beta},
\end{align}
with $\widetilde{Y}_i = f(x_i) + \sigma \psi(\epsilon_i/\sigma)/\mathbb{E}\{\psi^{\prime}(\epsilon/\sigma) \}$.
We aim to show analogues of Theorems \ref{Thm:1} and \ref{Thm:2} for this case. To that end we need a root-$n$ condition on $\widehat{\sigma}$ and a modification of A.4, as follows:

\begin{itemize}
\item[B.4] $\sqrt{n}(\widehat{\sigma}-\sigma) = O_P(1)$ for some scaling constant $\sigma$.
\item[B.5] $\psi(\cdot) \in \mathcal{C}^2(-\infty, \infty)$ with $\sup_{x}|\psi^{\prime \prime}(x)|<\infty$. For any $\epsilon>0$ there exists $M_{\epsilon}<\infty$ such that

\begin{equation*}
| \psi(tx) - \psi(sx)| \leq M_{\epsilon} |t-s|,
\end{equation*} 
for all $t, s > \epsilon$ and $-\infty < x < \infty$. Further, $\mathbb{E} \{\psi(\epsilon/\sigma)\} = 0$, $\mathbb{E} \{\psi^{\prime}(\epsilon/\sigma)\} > 0$, $\mathbb{E}\{|\psi(\epsilon/\sigma)|^2 \} < \infty$ and $\mathbb{E}\{|\psi^{\prime}(\epsilon/ \sigma)|^2 \} < \infty$.
\end{itemize}

The scaling constant $\sigma$ does not need to be the standard deviation of the error term, but it can be when the error has finite variance. Assumption B.5 implies the smoothness of $\psi$ but also that $\psi$ changes slowly in the tail. The latter also implies that $\psi$ is bounded. This condition was also used by \citet{He:1995} and is satisfied, e.g., by redescending $\psi$ functions and Huber $\psi$-functions modulo the smoothness assumption, see the remarks following A.4. The moment conditions parallel those of A.4 and ensure the Fisher-consistency of the estimates.

With these assumptions we can prove that there still exists a sequence of M-estimates that achieves the optimal rates of convergence.

\begin{theorem}
\label{Thm:3}
Under assumptions A.1-A.3 and B.4-B.5 there exists a sequence of M-type penalized spline estimates $\widehat{f}_n(\cdot)$ solving $\boldsymbol{\Phi}_{\widehat{\sigma}}(\boldsymbol{\beta}) = \mathbf{0}$ for which the following statements hold:
\item [(a)] If $f(\cdot) \in \mathcal{C}^{p}[a,b], K \asymp n^{1/(2p+1)}$ and $\lambda \asymp n^{-\gamma}$ with $ \gamma > (q+p)/(2p+1)$,
\begin{equation*}
||\widehat{f}-f||_n^2 = O_P( n^{-2p/(2p+1)} ).
\end{equation*}
\item[(b)] If $f(\cdot) \in \mathcal{C}^{q}[a,b],\ K \asymp  n^{v}$ with  $v  >1/(2q+1)$ and $\lambda \asymp  n^{-2q/{(2q+1)}}$,
\begin{equation*}
||\widehat{f}-f||_n^2 = O_P( n^{-2q/(2q+1)}).
\end{equation*}
\end{theorem}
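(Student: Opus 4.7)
The plan is to reduce Theorem~\ref{Thm:3} to Theorem~\ref{Thm:2} via a perturbation argument that exploits both the $\sqrt{n}$-consistency of $\widehat{\sigma}$ in B.4 and the uniform Lipschitz condition in B.5. Define $\widetilde{\rho}(x) := \rho(x/\sigma)$ and $\widetilde{\psi}(x) := \widetilde{\rho}'(x) = \psi(x/\sigma)/\sigma$, so that $\boldsymbol{\Phi}_{\sigma}$ is the estimating equation for the penalized M-estimator based on $\widetilde{\rho}$. Under B.5, $\widetilde{\rho}$ satisfies A.4 (smoothness of $\widetilde{\psi}$, bounded $\widetilde{\psi}''$, and the moment and Fisher-consistency conditions). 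Theorem~\ref{Thm:2} therefore produces an ``oracle'' sequence $\widehat{f}_\sigma$ solving $\boldsymbol{\Phi}_\sigma(\boldsymbol{\beta}) = \mathbf{0}$ that satisfies $\|\widehat{f}_\sigma - f\|_n^2 = O_P(n^{-2p/(2p+1)})$ in setting (a) and $O_P(n^{-2q/(2q+1)})$ in setting (b); the tuning choices stipulated in Theorem~\ref{Thm:3} fall within the admissible range of Theorem~\ref{Thm:2} because $\gamma > (q+p)/(2p+1)$ and $v > 1/(2q+1)$ make the ``shrinkage'' and ``approximation'' terms of Theorem~\ref{Thm:2} subdominant.

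Next I would show $\|\widehat{f}_{\widehat{\sigma}} - \widehat{f}_\sigma\|_n^2 = O_P(n^{-1})$, which is negligible compared with both target rates since $n^{-1}$ is faster than $n^{-2p/(2p+1)}$ and $n^{-2q/(2q+1)}$ for $p,q\geq 1$. Writing $r_i := Y_i - \mathbf{B}^\top(x_i)\widehat{\boldsymbol{\beta}}_\sigma$ and using $\boldsymbol{\Phi}_\sigma(\widehat{\boldsymbol{\beta}}_\sigma) = \mathbf{0}$, the residual of the perturbed estimating equation at the oracle is
\begin{equation*}
\boldsymbol{\Phi}_{\widehat{\sigma}}(\widehat{\boldsymbol{\beta}}_\sigma) \;=\; -\frac{1}{n}\sum_{i=1}^n \left\{\frac{\psi(r_i/\widehat{\sigma})}{\widehat{\sigma}} - \frac{\psi(r_i/\sigma)}{\sigma}\right\}\mathbf{B}(x_i).
\end{equation*}
Splitting the bracketed difference as $[\psi(r_i/\widehat{\sigma}) - \psi(r_i/\sigma)]/\widehat{\sigma} + \psi(r_i/\sigma)(1/\widehat{\sigma} - 1/\sigma)$ and applying the Lipschitz bound of B.5 with $t = 1/\widehat{\sigma}$, $s = 1/\sigma$ (both bounded away from zero and infinity in probability by B.4), the first term is $O_P(n^{-1/2})$ uniformly in~$i$ and~$r_i$, and the second is $O_P(n^{-1/2})$ by boundedness of $\psi$. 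Combined with the partition-of-unity property of the B-spline basis, $\boldsymbol{\Phi}_{\widehat{\sigma}}(\widehat{\boldsymbol{\beta}}_\sigma)$ has norm of the order needed to trigger the machinery developed for Theorem~\ref{Thm:1}.

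A Brouwer-type fixed-point argument, parallel to the proof of Theorem~\ref{Thm:1}, then produces a zero $\widehat{\boldsymbol{\beta}}_{\widehat{\sigma}}$ of $\boldsymbol{\Phi}_{\widehat{\sigma}}$ within $\|\cdot\|_n$-distance $O_P(n^{-1/2})$ of $\widehat{\boldsymbol{\beta}}_\sigma$. This requires that the Jacobian
\begin{equation*}
J(\boldsymbol{\beta}) \;=\; \frac{1}{n}\sum_{i=1}^n \psi'\!\left(\frac{Y_i - \mathbf{B}^\top(x_i)\boldsymbol{\beta}}{\widehat{\sigma}}\right)\frac{\mathbf{B}(x_i)\mathbf{B}^\top(x_i)}{\widehat{\sigma}^2} + 2\lambda\mathbf{D}
\end{equation*}
be uniformly coercive on a shrinking neighbourhood of $\widehat{\boldsymbol{\beta}}_\sigma$ in the semi-norm $\|\cdot\|_n$; this follows from $\mathbb{E}\{\psi'(\epsilon/\sigma)\}>0$, the concentration arguments and leverages bound already used in Theorem~\ref{Thm:1}, and the fact that replacing $\sigma$ by $\widehat{\sigma}$ shifts $\psi'$-values by $O_P(n^{-1/2})$ via B.5. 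Combining the oracle rate with $\|\widehat{f}_{\widehat{\sigma}} - f\|_n \le \|\widehat{f}_{\widehat{\sigma}} - \widehat{f}_\sigma\|_n + \|\widehat{f}_\sigma - f\|_n$ delivers the stated conclusions. The main obstacle is precisely this Jacobian-coerciveness step: one must control a diverging-dimensional quadratic form $u^\top J(\boldsymbol{\beta}) u$ uniformly over $\boldsymbol{\beta}$ near $\widehat{\boldsymbol{\beta}}_\sigma$ and over $\widehat{\sigma}$ in a shrinking neighbourhood of $\sigma$, and it is exactly the uniform-in-$x$ Lipschitz property of B.5 that makes the random scale harmless, since it prevents large residuals $r_i$ from amplifying the $O_P(n^{-1/2})$ fluctuation of $\widehat{\sigma}$.
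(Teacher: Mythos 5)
Your overall architecture (perturb the estimating equation by the random scale, then run a Brouwer fixed-point argument) matches the paper's, but there are two concrete problems. First, the claim $\|\widehat{f}_{\widehat{\sigma}}-\widehat{f}_{\sigma}\|_n^2 = O_P(n^{-1})$ is wrong. The uniform per-summand bound $|\psi(r_i/\widehat{\sigma})/\widehat{\sigma}-\psi(r_i/\sigma)/\sigma| = O_P(n^{-1/2})$ from B.4--B.5 is correct, but the displacement of the zero is obtained by passing the perturbation $-n^{-1}\sum_i \delta_i \mathbf{B}(x_i)$ through $\mathbf{G}^{-1}$, and $\lambda_{\min}(\mathbf{G}) \asymp K^{-1}$ (or $K^{-1}(1+K_{q,n}^{2q})$). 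In the norm that actually controls $\|\cdot\|_n$-distance between zeros, the perturbation is of size $A_n\,|\widehat{\sigma}-\sigma|$ with $A_n^2 = \max_i \mathbf{B}^{\top}(x_i)\mathbf{G}^{-1}\mathbf{B}(x_i)$, which is of order $K$ in case (a); squaring gives $O_P(K/n) \asymp O_P(C_n)$, the \emph{same} order as the target rate, not a negligible $O_P(n^{-1})$. The conclusion survives, but only because of the leverage bound $A_n^2 = O(nC_n)$ under the stated choices of $K$ and $\lambda$ --- this is exactly part (iii) of the paper's Lemma~\ref{Lem:2}, stated specifically ``under the conditions of Theorem~\ref{Thm:3}'', and you never verify anything of this kind; the ``partition of unity'' remark does not substitute for it because the relevant quadratic form involves $\mathbf{G}^{-1}$, not the Euclidean norm of the B-spline vector.

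Second, you anchor the fixed-point argument at the oracle M-estimator $\widehat{f}_{\sigma}$, which forces you to establish uniform coerciveness of the nonlinear Jacobian $J(\boldsymbol{\beta})$ over a shrinking neighbourhood in a diverging-dimensional space --- a step you explicitly identify as the main obstacle and do not carry out. The paper avoids this entirely: it anchors at $\widetilde{f}$, the zero of the \emph{linear} map $\boldsymbol{\Psi}_{\sigma}(\boldsymbol{\beta}) = \mathbf{G}\boldsymbol{\beta}-\mathbf{G}\widetilde{\boldsymbol{\beta}}$, and bounds $\|\boldsymbol{\Phi}_{\widehat{\sigma}}-\boldsymbol{\Psi}_{\sigma}\|_{\mathbf{G}} \le \|\boldsymbol{\Phi}_{\widehat{\sigma}}-\boldsymbol{\Phi}_{\sigma}\|_{\mathbf{G}} + \|\boldsymbol{\Phi}_{\sigma}-\boldsymbol{\Psi}_{\sigma}\|_{\mathbf{G}}$, reusing the Taylor-expansion bounds from Theorem~\ref{Thm:1} for the second term and adding only the scale-perturbation bound $Z A_n|\widehat{\sigma}-\sigma| = O_P(C_n^{1/2})$ for the first. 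Because $\boldsymbol{\Psi}_{\sigma}$ has Jacobian exactly $\mathbf{G}$ and the norm $\|\cdot\|_{\mathbf{G}}$ is built around it, no coerciveness of a data-dependent Hessian is ever needed. To repair your argument you would either need to prove the Jacobian bound you flagged, or restructure the proof along the paper's lines.
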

\noindent 
Note that the rate of growth of the knots and the rates of decay for $\lambda$ in each one of these cases ensure that we eventually have either $K_{q, n} <1 $ or $K_{q, n} \geq 1$. Corollary 1 regarding rates of convergence of derivatives in the $\mathcal{L}^2([a,b])$ metric carries over to this case in the same manner as previously. In general, scale estimators constructed using linear combinations of $Y_{i}$, such as those discussed in Section 3.1, satisfy the root-n condition and thus provide good preliminary scale estimates.

\section{A Monte-Carlo study}

In our simulation experiments we compare the performance of the M-type penalized spline estimator with auxiliary scale to the least-squares penalized spline and smoothing spline estimators.  For the penalized M-estimator we use the Huber $\rho$-function with tuning parameter equal to $1.345$, corresponding to 95\% efficiency in the location model. The auxiliary scale estimate is the IQR of the pseudo-residuals and we select the smoothing parameters for the penalized spline M-estimator as outlined in Section 3.2. 
The least-squares estimators can be easily fitted with the \texttt{gam} function \citep{Wood:2019} in the freeware \textsf{R}, \citep{R:2019}. By default, the penalty parameter is estimated by restricted maximum likelihood, see \citet{Ruppert:2003} and \citet{Wood:2017} for more details on this technique. 

We investigate the performance of the estimators in the regression model $Y_i = f(t_i) + \epsilon_i$ where $t_i = i/n$ and $f(\cdot)$ is each of the following three functions

\begin{enumerate}
\item $f_1(t) = \sin(2 \pi t) + \exp\{ -3(t-0.5)^2 \} + 0.4$,
\item $f_2(t) = 1/(0.1+t)+8\exp\{-400(t-0.5)^2\}$,
\item $f_3(t) = \phi((t-0.5)/0.15)- \phi((t-0.8)/0.04)$,
\end{enumerate}
where $\phi(\cdot)$ denotes the Gaussian density. All three functions are smooth but they differ qualitatively as $f_2(\cdot)$ and $f_3(\cdot)$, in contrast to $f_1(\cdot)$, exhibit strong local characteristics in the form of spikes and bumps.

In order to assess the effect of outliers on the estimates different distributions for the error term were considered. Other than the standard Gaussian distribution, we have complemented our set-up with a t-distribution with 3 degrees of freedom, a mixture of mean-zero Gaussians with standard deviations equal to 1 and 9 and weights equal to 0.85 and 0.15 respectively, as well as Tukey's Slash distribution. The resulting mean-squared-errors are summarized in Table \ref{tab:1} for sample sizes of 100 and 1000 replications.

\begin{table}[H]
\centering
\begin{tabular}{l c c c c c c c c }
\hline
& & \multicolumn{2}{c}{Huber(Psp)} & \multicolumn{2}{c}{Least-squares(Psp)} & \multicolumn{2}{c}{Least-squares(Smsp)} \\ 
$f(\cdot)$ & Distribution & Mean & Median & Mean & Median & Mean & Median  \\ \cline{1-8}

\multirow{4}{*}{$f_1(\cdot)$} & Gaussian & 0.067  & 0.055 & 0.068 & 0.052 & 0.062 & 0.049 \\
 & $t_3$ & 0.100 &  0.079 & 0.186 & 0.129 & 0.190 & 0.184 \\
 & M. Gaussian &  0.144 & 0.107 & 0.714 & 0.454 & 0.701 & 0.418 \\ 
 & Slash & 0.968 & 0.355 & 4979.6 & 5.761 & 5084.5 & 6.012 \\ \cline{1-8}
\multirow{4}{*}{$f_2(\cdot)$} & Gaussian &  0.225 & 0.217 & 0.203 & 0.198 & 0.212 & 0.198 \\
 & $t_3$ & 0.359  & 0.323  & 0.487&  0.419 & 0.515 & 0.430 \\
 & M. Gaussian & 0.699  & 0.535  & 1.772 & 1.594 & 1.883 & 1.583	 \\ 
 & Slash & 4.051 & 3.029 & 2476.7  & 8.506 &  2790.2 &  9.341 \\ \cline{1-8}
\multirow{4}{*}{$f_3(\cdot)$} & Gaussian & 0.056  & 0.045  & 0.064 & 0.048 & 0.062 &  0.047  \\
 & $t_3$ &  0.080&  0.056& 0.154 & 0.092  & 0.143 & 0.087  \\
 & M. Gaussian & 0.079 &0.061 & 0.665  & 0.350 & 0.550 & 0.306 \\ 
 & Slash & 0.481  & 0.146 &  1892.39 &  5.998 &  2604.5 &  5.489 \\ \cline{1-8}
\end{tabular}
\caption{Means and medians of the 1000 MSEs of the penalized spline Huber M-estimator, the penalized spline least-squares estimator and the smoothing spline least-squares estimator.}
\label{tab:1}
\end{table}

Comparing the two least-squares estimators reveals that using fewer basis functions than the number of observations hardly impacts performance. The results further demonstrate the extreme sensitivity of least-squares estimators with respect to even a small number of aberrant observations. We note, in particular, that both the t-distribution with 3 degrees of freedom and the mixture of Gaussian distributions have first moments equal to zero and finite second moments yet the performance of least-squares estimators is unduly affected. By contrast, the Huber estimator matches the performance of the least-squares estimator in Gaussian data and exhibits a large degree of resistance to aberrant observations. These facts illustrate the favourable trade-off penalized M-estimators tend to achieve.

\section{Real data examples}

\subsection{Mid-atlantic wage data}

We now illustrate the proposed M-type penalized spline estimators on two real datasets: the mid-atlantic wage dataset and the mammals dataset.  For the purpose of comparison we also include the least-squares estimator.  The datasets are freely available in the \textsf{R}-packages \texttt{ISLR} \citep{James:2013} and \texttt{quantreg} \citep{Koenker:2012} respectively.

The mid-atlantic wage dataset consists of 3000 observations on different characteristics of male workers in the said region of the United States. The dataset contains eleven socio-economic variables but here we focus on the relationship between age in years and yearly raw wages recorded in 2011 US dollars. Typically, income distribution is right-skewed so a few outlying observations would be the rule rather than the exception. A scatter-plot of these variables with the Huber and the least-squares penalized fits is shown in the left panel of Figure \ref{fig:1}.

\begin{figure}[H]
\centering
\subfloat{\includegraphics[width = 0.495\textwidth]{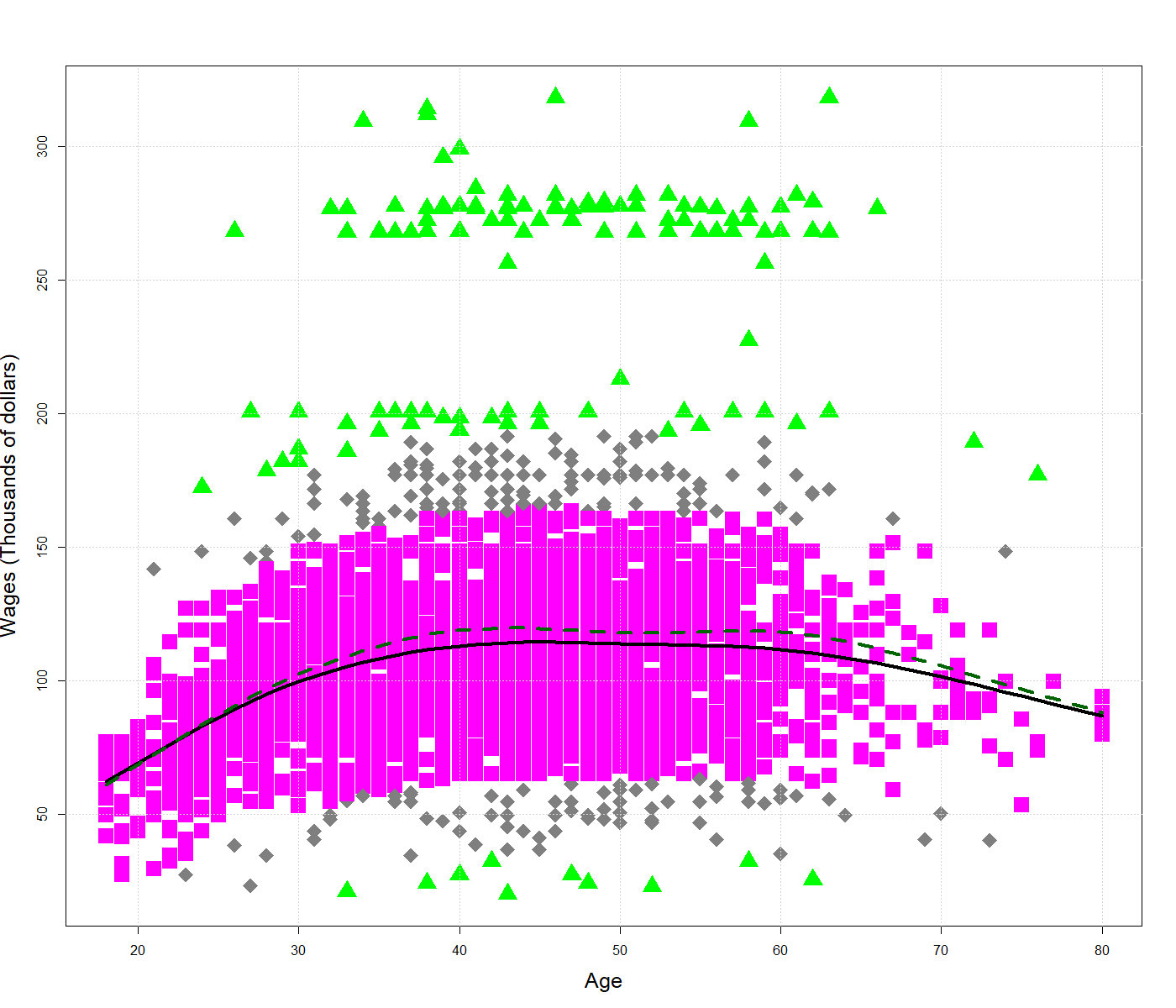}} \
\subfloat{\includegraphics[width = 0.495\textwidth]{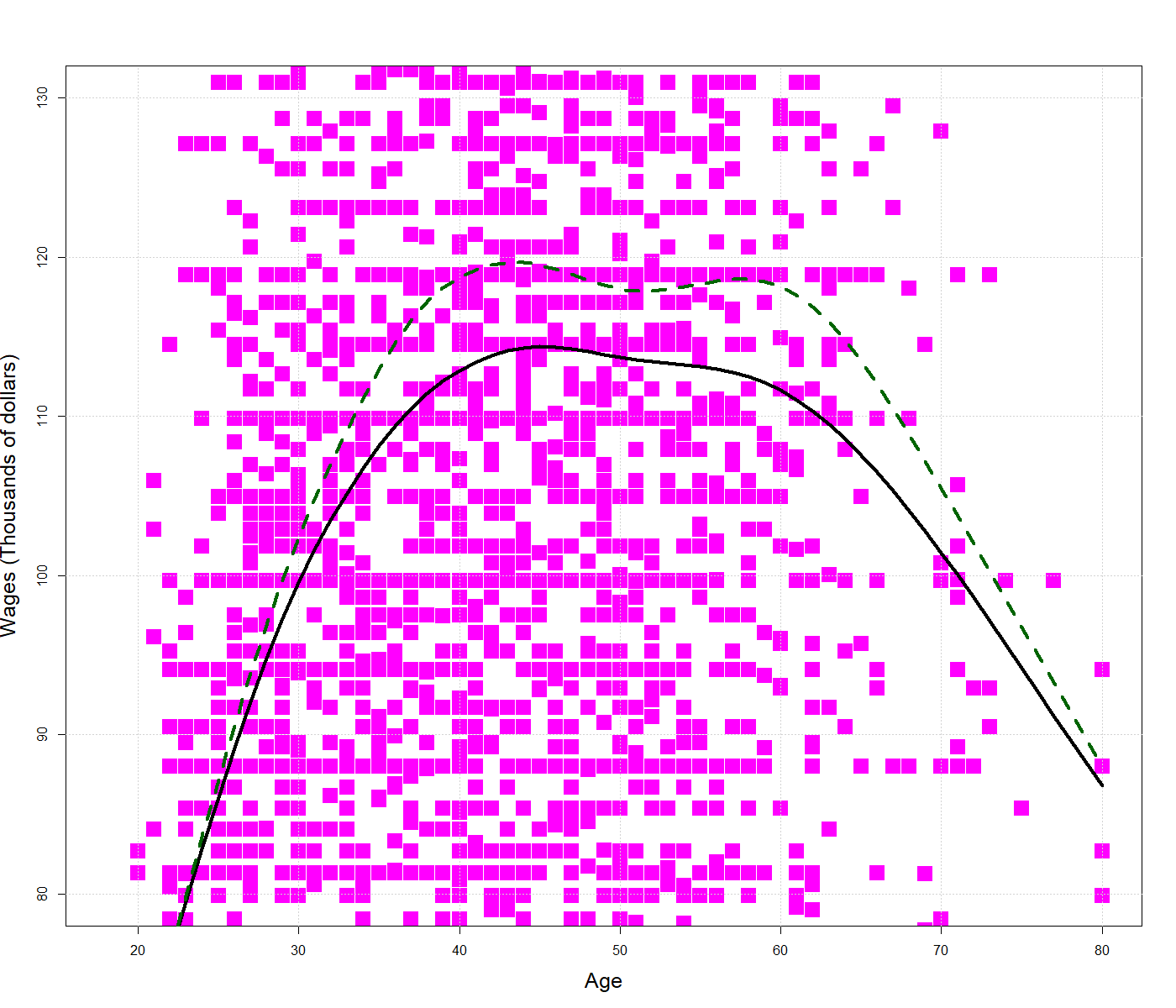}}
\caption{Left: scatter plot of the mid-atlantic wage dataset with least-squares (dashed curve) and Huber (solid curve) fits superimposed. Right: A closer look at the fits with a rescaled vertical axis. The symbols $\blacktriangle, \blacklozenge, \blacksquare$ correspond to observations with weights $(0, 0.33],(0.33, 0.66], (0.66, 1]$ respectively.}
\label{fig:1}
\end{figure}

Both fits point towards a curvilinear relationship with a slight bend downwards; this reflects the fact that workers reach their peak income at the middle of their work life and their income slightly decreases afterwards. Some care, however, is needed in this interpretation as relatively few observations are available for workers past the usual retirement age.

Comparing the fits in more detail leads us to notice that the least-squares fit overestimates the mean salary for younger to middle-aged workers; this may be explained by observing that a number of very high-earners exert disproportionate influence on the estimate, effectively pulling the fit upwards. By contrast, the M-estimator remains largely resistant to these atypical observations. To better understand this discrepancy the plot also includes color and shape coding based on the weights generated by the M-estimator. While all observations receive equal weights by the least-squares estimator, the observations corresponding to atypically high-earners are greatly down-weighted by the M-estimator leading to differences in the fits. Restricting attention to observations in the middle on the right panel shows that these differences can be as high as 10000 US dollars, which is a respectable amount of money from both individual and policy standpoints.

\subsection{Mammals weight and speed data}

The mammals dataset consists of 107 observations on maximal running speeds and weights of mammals, see \citet{Garland:1983} for more information as well as a parametric regression analysis of this relationship. A scatter plot of these variables with the Huber and the least-squares penalized fits is shown in the left panel of Figure \ref{fig:2}.

\begin{figure}[h]
\centering
\subfloat{\includegraphics[width = 0.495\textwidth]{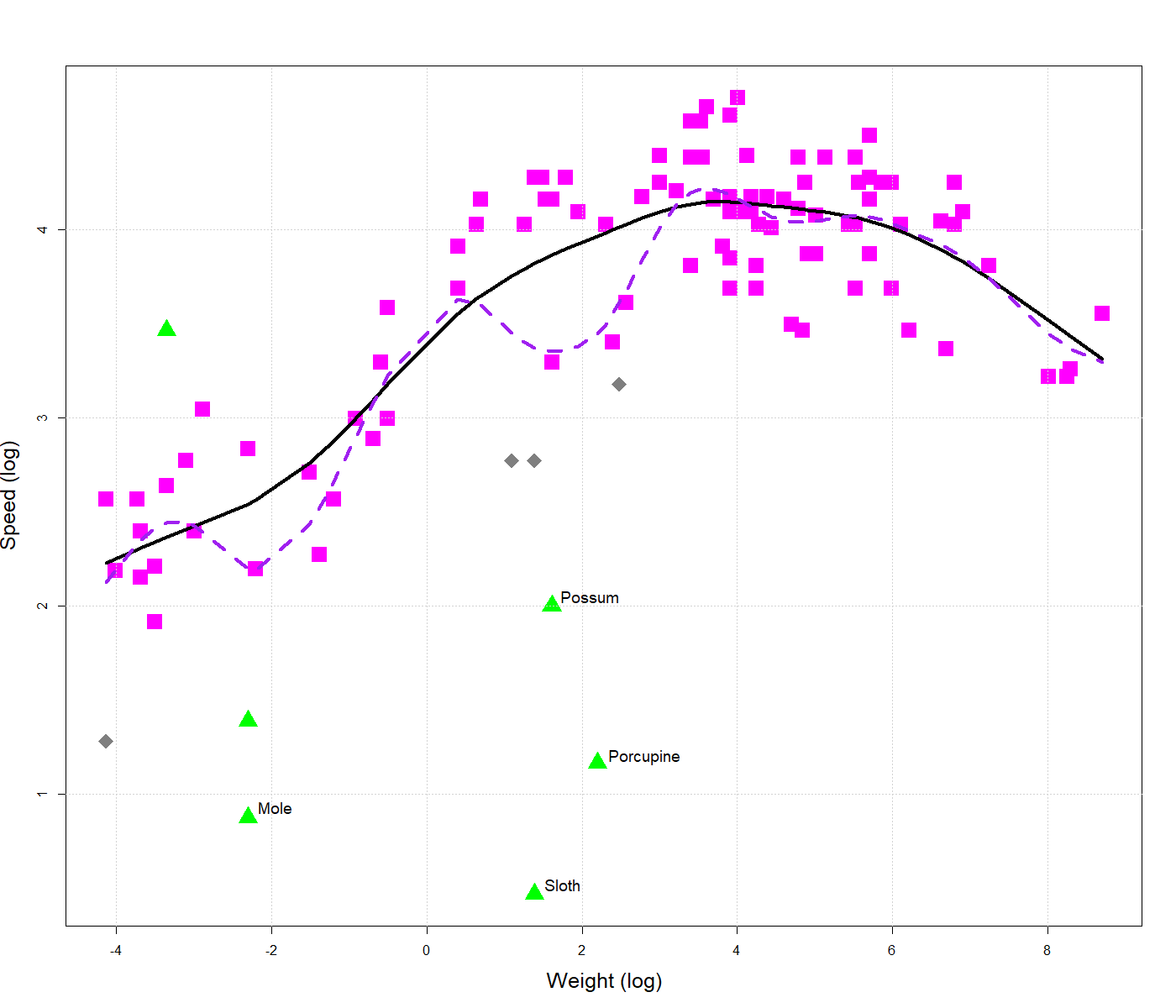}} \
\subfloat{\includegraphics[width = 0.495\textwidth]{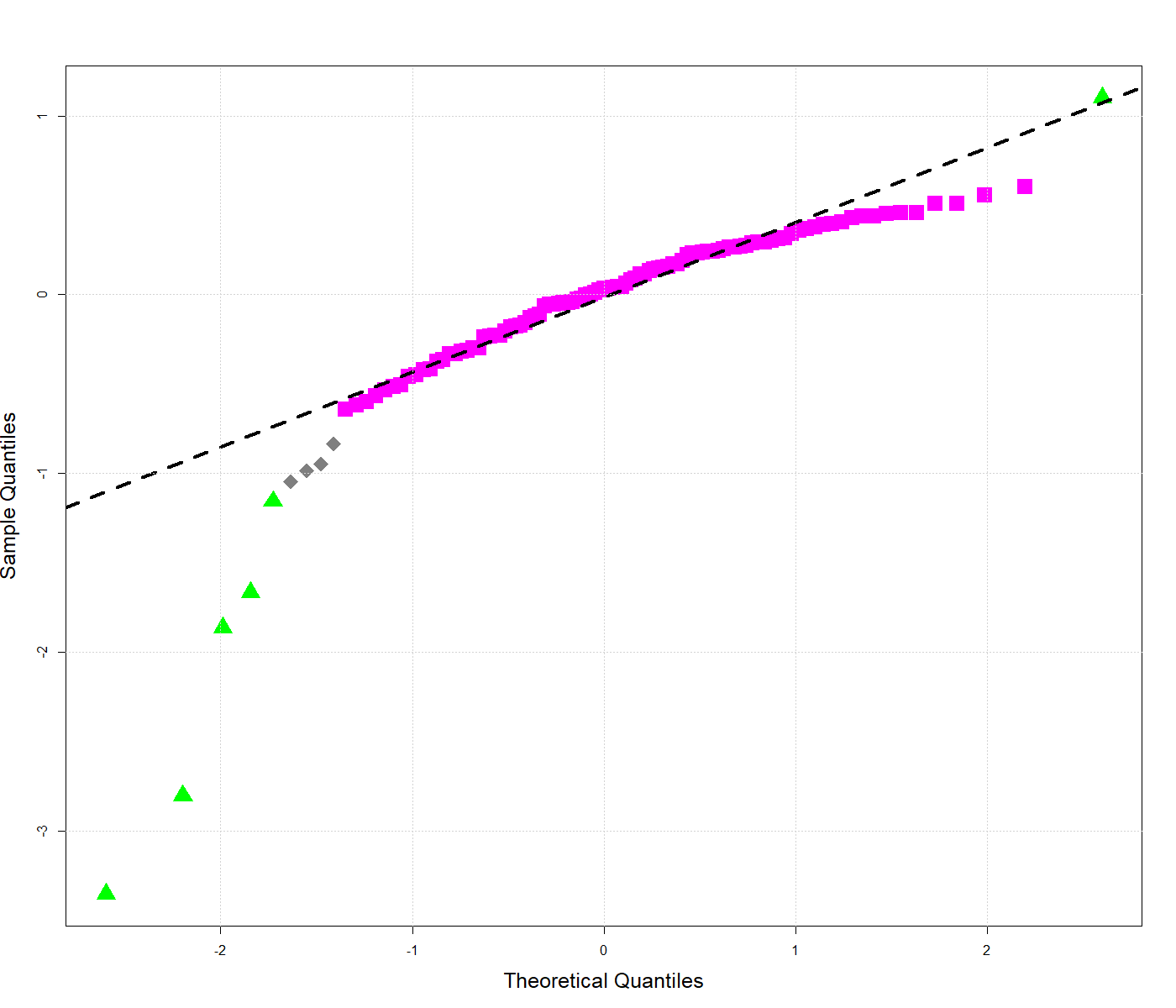}}
\caption{Left: scatter plot of the mammals dataset with least-squares (dashed curve) and Huber (solid curve) fits superimposed. Right: QQ plot of the residuals of the Huber fit. The symbols $\blacktriangle, \blacklozenge, \blacksquare$ correspond to observations with weights $(0, 0.33],(0.33, 0.66], (0.66, 1]$ respectively.}
\label{fig:2}
\end{figure}

The scatter plot and the fits easily refute the naive hypothesis that speed should be a decreasing function of weight. Curiously, speed seems to increase with weight up to a certain point, which we may call "optimal weight", and decrease afterwards. That is, neither the smallest nor the largest animals are the fastest.

Several outliers significantly impact the least-squares analysis, however. These for the most part correspond to small rodent-like animals (see the included labels) whose speed is far below what could be expected given their weight. The effect of these outlying observations is again to pull the least-squares fit towards them resulting in substantial added curvature.  On the contrary, the 4 most outlying observations receive a near-zero weight by the M-estimator and as a result their impact on the estimated regression function is limited. The QQ plot in the right panel indicates that rather than trying to fit all observations, the M-estimator only focuses on the "good" majority for which it provides a better fit than the least-squares estimator. 

Overall, both examples illustrate the benefits of using an M-type penalized spline estimator for the analysis of data with atypical observations. 

\section{Discussion}

The results in this paper indicate that there is little theoretical difference between least-squares and general M-type penalized spline estimators. In particular, the findings of \citet{Claeskens:2009} in support for a smaller number of knots also apply to M-type penalized spline estimators. The latter class of estimators is broad enough to include the least-squares estimator as a special case but also includes estimators that are much less susceptible to atypical observations while performing as well as the least-squares estimators in clean data sets. For these M-type estimators, under some weak restrictions the results carry over even if one uses a preliminary scale estimate as a means of standardization. This can be useful for outlier detection, as demonstrated in our two real-data examples. In effect, we view the proposed penalized spline estimator as the first of its kind combining good theoretical properties and computational ease.

It would be of great interest to extend the penalized spline estimation techniques presented here to robust generalized linear models, using, e.g., the estimator proposed by \citet{ Cantoni:2001}. Generalized additive models have been immensely popular in recent years due to their flexibility and ease of use and we are confident that M-type penalized spline estimation can be successfully used in this context as well. Another important area where robust penalized sieved estimators can be successful is functional regression and its variants, which have also  attracted great interest recently. We aim to study such extensions in detail as a part of our future work.

\section*{Software availability}

An implementation of the M-type penalized spline estimator with settings described herein may be found in the website \url{https://wis.kuleuven.be/statdatascience/robust/papers-2010-2019}. The code also reproduces the plots of Section 6.

\section*{Appendix}

The Appendix contains the proofs of Theorem \ref{Thm:1}, Corollary \ref{Cor:1} and Theorem \ref{Thm:3}. A proof that the computational algorithm converges to a stationary point of the objective function is available from the authors upon request. 

We start with two lemmas that are crucial for the proofs of the results of Section 4. Lemma \ref{Lem:1} essentially states that splines are excellent approximators of smooth functions and Lemma \ref{Lem:2} establishes a set of strong Lindeberg-type conditions on the rows of the spline-design matrix.

\begin{lemma}
\label{Lem:1}
For each $f(\cdot) \in \mathcal{C}^{j}[a,b]$ there exists a spline function $s_f$ of order $p$, $p >j$ such that

\begin{equation*}
\sup_{x \in [a,b]} | f(x) - s_f(x) | \leq \const_{p,j} |\mathbf{t}|^{j} \sup_{|x-y|<\mathbf{t}} |f^{(j)}(x)-f^{(j)}(y)|
\end{equation*}
where $\mathbf{t} = \max_i|t_i - t_{i-1}|$ and the constant depends only on $p$ and $j$.
\end{lemma}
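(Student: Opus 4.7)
This is the classical Jackson-type approximation theorem for splines. My plan is to exhibit $s_f$ as the image of $f$ under a suitable quasi-interpolation operator. Specifically, I would construct a linear map $Q : C[a,b] \to S_K^p$ of the form $Qf = \sum_{i=1}^{K+p} \lambda_i(f)\, B_{K,i}$ in which each functional $\lambda_i$ is local (meaning $\lambda_i(f)$ depends only on values of $f$ on a fixed neighborhood of $[t_i, t_{i+p}]$), bounded ($|\lambda_i(f)| \leq C_p \sup_{y \in [t_i, t_{i+p}]} |f(y)|$), and such that $Q$ reproduces every polynomial of degree at most $p-1$. Such a $Q$ is, e.g., the de~Boor--Fix quasi-interpolant; its existence and these three properties are proved in Chapter XII of \citet{DB:2001}. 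Together with the non-negativity and partition-of-unity properties (b)--(c) of the B-spline basis, this yields $\|Qg\|_\infty \leq C_p' \|g\|_\infty$, and locality ensures that $(Qg)(x)$ only involves values of $g$ on an interval of length at most $c_p |\mathbf{t}|$ around $x$.

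Next, fix $x \in [a,b]$ and let $T_x$ denote the Taylor polynomial of $f$ of degree $j$ about $x$, so that $T_x(x) = f(x)$ and $\deg T_x = j < p$. Polynomial reproduction gives $QT_x = T_x$, hence $f(x) - (Qf)(x) = -Q(f - T_x)(x)$. The integral form of Taylor's remainder, combined with the continuity of $f^{(j)}$, yields
\[
|f(y) - T_x(y)| \;\leq\; \frac{|y-x|^j}{j!}\, \omega\bigl(f^{(j)}, |y-x|\bigr), \qquad \omega(g,r) := \sup_{|u-v|\leq r} |g(u) - g(v)|.
\]
By locality, $Q(f - T_x)(x)$ only involves $f - T_x$ on an interval of length at most $c_p |\mathbf{t}|$ about $x$, so the $L^\infty$-boundedness of $Q$ yields $|Q(f-T_x)(x)| \leq C''_{p,j} |\mathbf{t}|^j\, \omega(f^{(j)}, c_p |\mathbf{t}|)$. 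The sub-additivity inequality $\omega(g, kr) \leq (k+1)\omega(g, r)$ then converts this into a $(p,j)$-dependent multiple of $|\mathbf{t}|^j\, \omega(f^{(j)}, |\mathbf{t}|)$, and the announced estimate follows with $s_f := Qf$.

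The main obstacle is the construction of $Q$ with all three properties holding simultaneously (locality, $L^\infty$-boundedness, and reproduction of polynomials of degree at most $p-1$); this is classical but non-trivial, so I would invoke the explicit construction in \citet{DB:2001} rather than rederive it. The remainder of the argument is only Taylor's theorem combined with the B-spline partition-of-unity property, both of which enter in a routine way.
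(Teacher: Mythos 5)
Your proposal is correct and is essentially a reconstruction of the argument the paper relies on: the paper's ``proof'' of Lemma~\ref{Lem:1} is simply a citation to \citet[pp.~145--149]{DB:2001}, where exactly this quasi-interpolant/Taylor-remainder argument (locality, $L^\infty$-boundedness, reproduction of polynomials of degree at most $p-1$, followed by the modulus-of-continuity bound) is carried out. There is nothing to add; your sketch fills in the details the paper delegates to the reference.
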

\begin{proof}
See \citet[pp.145-149]{DB:2001}. 
\end{proof}

\begin{lemma}
\label{Lem:2}
Define $\mathbf{G} := n^{-1} \sum_{i=1}^n \mathbf{B}(x_i) \mathbf{B}^{\top}(x_i) + \lambda \mathbf{D}$ with $\mathbf{B}(x_i)$ and $\mathbf{D}$ defined in Section 3.1. 
Under A.1-A.3 the following asymptotic relations hold
\begin{itemize}
\item[(i)] $n^{-1} \max_{1 \leq i \leq n} \mathbf{B}^{\top} (x_i)\mathbf{G}^{-1} \mathbf{B}(x_i) = o(1)$, as $n \to \infty$.
\item[(ii)] $C_n \max_{1 \leq i \leq n} \mathbf{B}^{\top}(x_i) \mathbf{G}^{-1} \mathbf{B}(x_i)= o(1)$ as $n \to \infty$,
\item[(iii)] $\max_{1 \leq i \leq n} \mathbf{B}^{\top}(x_i) \mathbf{G}^{-1} \mathbf{B}(x_i)/(C_n n) = O(1)$ as $n \to \infty$, under the conditions of Theorem \ref{Thm:3}.
\end{itemize}
where $C_n = \mathbb{E} \{|| \widetilde{f}- f||_n^2\}$ is the average mean-squared error of the theoretical least-squares estimator.

\end{lemma}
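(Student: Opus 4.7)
\noindent\textbf{Proof proposal for Lemma \ref{Lem:2}.}

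The plan is to reduce all three statements to one uniform bound, namely that under A.1--A.3,
\begin{equation*}
\max_{1 \leq i \leq n} \mathbf{B}^{\top}(x_i) \mathbf{G}^{-1} \mathbf{B}(x_i) = O(K),
\end{equation*}
and then to verify each of (i)--(iii) by combining this bound with the rate information available in the relevant setting. Two ingredients feed into the master bound. First, the local support property (b) of the B-spline basis implies that $\mathbf{B}(x_i)$ has at most $p$ non-zero coordinates, each bounded by $1$, so that $\|\mathbf{B}(x_i)\|^2 \leq p$ uniformly in $i$ and in $n$. Second, because $\mathbf{D} \succeq \mathbf{0}$ we have $\mathbf{G} \succeq n^{-1}\mathbf{B}^{\top}\mathbf{B}$, whence
\begin{equation*}
\mathbf{B}^{\top}(x_i)\mathbf{G}^{-1}\mathbf{B}(x_i) \leq \lambda_{\max}\bigl(\mathbf{G}^{-1}\bigr)\|\mathbf{B}(x_i)\|^2 \leq \frac{p}{\lambda_{\min}(n^{-1}\mathbf{B}^{\top}\mathbf{B})}.
\end{equation*}
Thus the whole argument reduces to controlling $\lambda_{\min}(n^{-1}\mathbf{B}^{\top}\mathbf{B})$ from below.

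The core technical step is to show $\lambda_{\min}(n^{-1}\mathbf{B}^{\top}\mathbf{B}) \geq c K^{-1}$ for a positive constant $c$ and all sufficiently large $n$. I would proceed in two stages. In the first stage, I use the quasi-uniform knot assumption A.1 together with the classical B-spline Gram-matrix estimates (de Boor, Chapter IX) to obtain the norm equivalence $\int_{a}^{b}\bigl(\sum_j \beta_j B_{K,j}(x)\bigr)^2 w(x)\,dx \asymp K^{-1}\|\boldsymbol{\beta}\|^2$, using that $w$ is bounded away from zero and infinity by A.2. In the second stage, I replace the $Q$-integral by the empirical $Q_n$-integral, exploiting the bound $\sup_x|Q_n(x)-Q(x)| = o(K^{-1})$ in A.2: for any spline $s_\beta = \sum_j \beta_j B_{K,j}$ the total variation of $s_\beta^2$ on $[a,b]$ is controlled by $K\sup s_\beta^2$, so integration-by-parts against $(Q_n-Q)$ gives an error of order $o(K^{-1})\|\boldsymbol{\beta}\|^2$, which is absorbed into the lower bound for large $n$. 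This yields $n^{-1}\|\mathbf{B}\boldsymbol{\beta}\|^2 \geq c K^{-1}\|\boldsymbol{\beta}\|^2$ uniformly in $\boldsymbol{\beta}$, hence the desired eigenvalue bound.

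With $\max_i \mathbf{B}^{\top}(x_i)\mathbf{G}^{-1}\mathbf{B}(x_i) = O(K)$ in hand, the three parts follow quickly. Part (i) is $n^{-1}\cdot O(K) = O(K/n)$, which tends to zero by A.3. For part (ii) I invoke Theorem 1 of \citet{Claeskens:2009} applied to the theoretical least-squares estimator $\widetilde{f}$ on the pseudo-observations to bound $C_n = O(K/n) + O(\lambda^2 K^{2q}) + O(K^{-2p})$ in regime (a) of Theorem \ref{Thm:2} and $C_n = O((n\lambda^{1/2q})^{-1}) + O(\lambda) + O(K^{-2q})$ in regime (b); in each case a direct check against the growth condition on $K$ and decay condition on $\lambda$ shows $C_n\cdot K = o(1)$. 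For part (iii), under the balanced rates prescribed in Theorem \ref{Thm:3}, $C_n \asymp n^{-2p/(2p+1)}$ (respectively $n^{-2q/(2q+1)}$) while $K/n \asymp n^{-2p/(2p+1)}$ (resp.\ at most $n^{-2q/(2q+1)}$), so $K/(n C_n) = O(1)$.

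The main obstacle is the eigenvalue lower bound $\lambda_{\min}(n^{-1}\mathbf{B}^{\top}\mathbf{B}) \gtrsim K^{-1}$: moving from the continuous Gram matrix to the empirical one requires the fine control over $Q_n-Q$ that A.2 provides, and the total-variation-type estimate on $s_\beta^2$ must be uniform over the spline coefficient vector. Once this is established, everything else is bookkeeping against the MSE bounds already proved by \citet{Claeskens:2009}.
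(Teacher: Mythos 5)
Your reduction of all three parts to the single bound $\max_i \mathbf{B}^{\top}(x_i)\mathbf{G}^{-1}\mathbf{B}(x_i) = O(K)$ contains a genuine gap: that bound is too crude in the regime where $K$ grows quickly, and parts (ii) and (iii) fail with it there. Concretely, when $K_{q,n}\geq 1$ one may take $K \asymp n^{v}$ with any $v \geq 1/(2q+1)$ (subject only to $K=o(n)$) and $\lambda \asymp n^{-2q/(2q+1)}$; then $C_n \asymp (n\lambda^{1/2q})^{-1} + \lambda + K^{-2q} \asymp n^{-2q/(2q+1)}$, so $K C_n \asymp n^{v-2q/(2q+1)} \to \infty$ whenever $v > 2q/(2q+1)$, and $K/(nC_n) \asymp n^{v-1/(2q+1)} \to \infty$ whenever $v > 1/(2q+1)$ --- which Theorem \ref{Thm:3}(b) explicitly requires. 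So the ``direct check'' you invoke does not go through. The missing idea is that in this regime the penalty term must not be discarded: passing to $\mathbf{G}\succeq n^{-1}\mathbf{B}^{\top}\mathbf{B}$ throws away exactly the part of $\mathbf{G}$ that controls the quadratic form when $K$ is large. The paper instead uses Lemma A1 of \citet{Claeskens:2009}, which gives $\lambda_{\min}(\mathbf{G}) \geq cK^{-1}(1+K_{q,n}^{2q})$ with $K_{q,n}^{2q}\sim \widetilde{c}_1 K^{2q}\lambda$, hence $\max_i \mathbf{B}^{\top}(x_i)\mathbf{G}^{-1}\mathbf{B}(x_i) = O(K^{1-2q}\lambda^{-1})$; the technical conditions $n\lambda^{(2q+1)/2q}K^{2q-1}\to\infty$ and $\lambda K^{4q-1}\to\infty$ of Theorem \ref{Thm:2}(b) are precisely what is needed to make $C_n K^{1-2q}\lambda^{-1}=o(1)$, and the rate choices of Theorem \ref{Thm:3} then make (iii) an $O(1)$ statement.

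The remainder of your argument is sound and close to the paper's. In the regime $K_{q,n}<1$ your $O(K)$ bound coincides with the paper's (which bounds $\sum_{\ell}|B_{K,\ell}(x_i)|^2\leq 1$ via properties (b) and (c) rather than counting supports, to the same effect), and deducing (i) from (ii) via $nC_n\to\infty$ is also how the paper proceeds. Your plan to re-derive $\lambda_{\min}(n^{-1}\mathbf{B}^{\top}\mathbf{B})\geq cK^{-1}$ from A.1--A.2 through a continuous-to-empirical Gram comparison is a legitimate, more self-contained route; the paper simply cites Lemma 6.2 of \citet{Shen:1998} or Lemma 5.1 of \citet{Shi:1995} for this. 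But you must split the analysis into the two regimes $K_{q,n}<1$ and $K_{q,n}\geq 1$ and use the penalty-aware eigenvalue bound in the second; without that, (ii) and (iii) are not established.
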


\begin{proof}
It suffices to prove $(ii)$ and $(iii)$ as $(i)$ follows directly from $(ii)$ given that $C_n$ converges to zero at a rate strictly lower than the parametric rate. 

Assume first that $K_{q,n}<1$ eventually. Either by Lemma 6.2 of \citet{Shen:1998} or by Lemma 5.1 of \citet{Shi:1995} it follows that for large $n$ there exists a positive constant $c$ such that 

\begin{equation*}
\lambda_{\min}(n^{-1} \sum_{i=1}^n \mathbf{B}(x_i) \mathbf{B}^{\top}(x_i)) \geq c K^{-1},
\end{equation*}
where $\lambda_{\min}(\cdot)$ denotes the function that returns the smallest eigenvalue of a symmetric positive semi-definite matrix. Further, since for any $\boldsymbol{\beta} \in \mathbb{R}^{K+p}$,	

\begin{equation*}
0 \leq \int_{[a,b]} \left[ \left\{ \sum_{j \leq K+p} \beta_j B_{K, j}(t) \right\}^{(q)} \right]^2 dt = \boldsymbol{\beta}^{\top} \mathbf{D} \boldsymbol{\beta},
\end{equation*}
the penalty matrix $\mathbf{D}$ is positive semi-definite and, as $\lambda>0$, we find that
\begin{align*}
\max_{1 \leq i \leq n} \mathbf{B}^{\top}(x_i) \mathbf{G}^{-1} \mathbf{B}(x_i) & \leq \max_{1 \leq i \leq n} \mathbf{B}^{\top} (x_i) \left\{n^{-1} \sum_{j=1}^n \mathbf{B}(x_j) \mathbf{B}^{\top} (x_j) \right\}^{-1} \mathbf{B}(x_i) 
\\ & \leq c^{-1} K \max_{1 \leq i \leq n} \sum_{\ell=1}^{K+p} |B_{k,\ell}(x_i)|^2 \\
& \leq c^{-1} K \max_{1 \leq i \leq n} \underbrace{\max_{1 \leq \ell \leq K+p} B_{k, \ell}(x_i)}_{\leq 1} \underbrace{\sum_{\ell=1}^{K+p} B_{k,\ell}(x_i)}_{=1}
\\ & \leq c^{-1} K,
\end{align*}
by properties (b) and (c) of the B-spline functions. Result (ii) now follows from the expression for $C_n$ for $K_{q,n}<1$ of Theorem 2 and the additional assumption that $p >1$. 

In the case that $K_{q,n} \geq 1 $ eventually  we need a tighter bound for the smallest eigenvalue $\mathbf{G}$, as given in Lemma A1 of \citet{Claeskens:2009}. In particular, for $K_{q,n} \geq 1$ these authors show that with probability tending to one

\begin{equation*}
\lambda_{\min}(\mathbf{G}) \geq c K^{-1} (1+ K_{q,n}^{2q} ),
\end{equation*}
for the same constant $c$. Since $K_{q,n}^{2q} \sim \widetilde{c_1} K^{2q} \lambda$, to check (ii) it suffices to establish that

\begin{equation*}
\left( \frac{1}{n\lambda^{1/2q}} +  \lambda +  K^{-2q} \right) K^{1-2q} \lambda^{-1} = o_P(1),
\end{equation*}
as $n \to \infty$. This holds by assumption. In particular, it holds for $\lambda \asymp  n^{-2q/(2q+1)}$ and $K \asymp n^{v}$ with $v \geq 1/(2q+1)$ that lead to the optimal rates of convergence.	

To prove (iii) note that either $C_n \asymp n^{-2p/(2p+1)}$ or $C_n \asymp n^{-2p/(2p+1)}$ under the conditions of Theorem \ref{Thm:3}, that is, the rates of growth of $K$ and the rates of decay of $\lambda$. The statement follows from simple multiplication.
	
\end{proof}

We now turn to the proof of Theorem 1. 

\begin{proof}[Proof of Theorem 1]

From Lemma 1 there exists a spline function $f^{\star}$ such that

\begin{equation*}
\sup_{x \in [a,b]} |f(x) - f^{\star}(x)| = O(D_K),
\end{equation*}
where $D_K = K^{-p}$ or $D_K = K^{-q}$ depending on whether $f \in \mathcal{C}^{p}[a,b]$ or $f \in \mathcal{C}^{q}[a,b]$ respectively. Since B-splines form a basis for $S_{K}^p$ we can put $f^{\star} = \sum_{j=1}^{K+p} \beta^{\star}_j B_{K,j}$. 

Expanding $\mathbf{\Psi}$ we can write

\begin{equation*}
\boldsymbol{\Psi}(\boldsymbol{\beta}) = -\frac{1}{n} \sum_{i=1}^{n} \left\{  f(x_i) -  \mathbf{B}^{\top}(x_i) \boldsymbol{\beta} \right\} \mathbf{B}(x_i) - \frac{1}{n \mathbb{E} \{\psi^{\prime}(\epsilon)\} } \sum_{i=1}^n \psi(\epsilon_i) \mathbf{B}(x_i) + 2 \frac{\lambda}{\mathbb{E} \{\psi^{\prime}(\epsilon)\}} \mathbf{D} \boldsymbol{\beta}.
\end{equation*}
Let $\lambda_0 := 2\lambda/\mathbb{E} \{\psi^{\prime}(\epsilon)\}$ and define, as before,
\begin{equation*}
\mathbf{G}:= n^{-1}\mathbf{B}^{\top} \mathbf{B} + \lambda_0 \mathbf{D}.
\end{equation*}
Recall that $\psi(\cdot) \in \mathcal{C}^{2}(-\infty,\infty)$, by A.4. A Taylor expansion allows us to write

\begin{align*}
\sum_{i=1}^n \psi \left(Y_i - \mathbf{B}^{\top}(x_i) \boldsymbol{\beta}  \right) \mathbf{B}(x_i) & = \sum_{i=1}^n \psi \left( \epsilon_i + f(x_i) - \mathbf{B}^{\top}(x_i) \boldsymbol{\beta}  \right) \mathbf{B}(x_i) \\
& = \sum_{i=1}^{n} \psi(\epsilon_i) \mathbf{B}(x_i) + \sum_{i=1}^n \psi^{\prime} (\epsilon_i) \left\{ f(x_i) - \mathbf{B}^{\top}(x_i) \boldsymbol{\beta}  \right\} \mathbf{B}(x_i) \\ &\phantom{{}=1}  + \frac{1}{2}\sum_{i=1}^n \psi^{\prime \prime} (c_i) \left\{ f(x_i) - \mathbf{B}^{\top}(x_i) \boldsymbol{\beta} \right\}^2 \mathbf{B}(x_i).
\end{align*}
for some mean values $c_i$, each depending on the $i$th observation.	

On $\mathbb{R}^{K+p}$ define the bilinear form
\begin{align*}
\left\| \boldsymbol{\beta} \right\|_{\mathbf{G}}^2 & : = n^{-1} ||\mathbf{B} \mathbf{G}^{-1}\boldsymbol{\beta}||_{E}^2,
\end{align*}
where $|| \cdot ||_{E}$ denotes the usual euclidean norm.  We note that$ || \cdot ||_{\mathbf{G}}$ is well-defined because $\mathbf{G}$ is invertible for large $n$, see the proof of Lemma 1. From the triangle inequality we have

\begin{align*}
\left\|\boldsymbol{\Psi}(\boldsymbol{\beta}) - \boldsymbol{\Phi}(\boldsymbol{\beta})/\mathbb{E} \{\psi^{\prime}(\epsilon)\} \right\|_{\mathbf{G}} \leq T_1 + T_2,
\end{align*}
with

\begin{align*}
T_1 & = \left\| \frac{1}{n \mathbb{E} \{\psi^{\prime}(\epsilon)\}}  \sum_{i=1}^n \psi^{\prime} (\epsilon_i) \left\{ f(x_i) - \mathbf{B}^{\top}(x_i) \boldsymbol{\beta} \right\} \mathbf{B}(x_i) - \frac{1}{n} \sum_{i=1}^n \left\{ f(x_i) - \mathbf{B}^{\top}(x_i) \boldsymbol{\beta}  \right\} \mathbf{B}(x_i)\right\|_{\mathbf{G}} \\
& = \left\|\frac{1}{n \mathbb{E} \{\psi^{\prime}(\epsilon)\} }  \sum_{i=1}^n  \left\{ f(x_i) - \mathbf{B}^{\top} (x_i) \boldsymbol{\beta} \right\} \left\{ \psi^{\prime}(\epsilon_i) - \mathbb{E} \{\psi^{\prime}(\epsilon)\} \right\} \mathbf{B}(x_i) \right\|_{\mathbf{G}},
\end{align*}
and 

\begin{align*}
T_2 = \left\| \frac{1}{2 n \mathbb{E} \{\psi^{\prime}(\epsilon)\} }\sum_{i=1}^n \psi^{\prime \prime} (c_i) \left\{ f(x_i) - \mathbf{B}^{\top}(x_i) \boldsymbol{\beta} \right\}^2 \mathbf{B}(x_i) \right\|_{\mathbf{G}}.
\end{align*}
Using A.2, A.4 as well as the independence and identical distributions of $\epsilon_i$, we have

\begin{align}
\label{eq:21}
\mathbb{E} T_1^2 & =  \frac{\Var \{ \psi^{\prime}(\epsilon) \}}{n^2 \left(\mathbb{E} \{\psi^{\prime}(\epsilon)\} \right)^2 } \sum_{i=1}^n \left\{ f(x_i) - \mathbf{B}^{\top}(x_i) \boldsymbol{\beta} \right\}^2 \left\|  \mathbf{B}(x_i) \right\|_\mathbf{G}^2 \nonumber
\\ & \leq   n^{-1}\max_{1\leq i\leq n}\left\|  \mathbf{B}(x_i) \right\|_\mathbf{G}^2 \frac{\mathbb{E}\{|\psi^{\prime}(\epsilon)|^2 \}}{\left(\mathbb{E} \{\psi^{\prime}(\epsilon)\} \right)^2 } n^{-1} \sum_{i=1}^n \left\{f(x_i) - \mathbf{B}^{\top}(x_i) \boldsymbol{\beta} \right\}^2.
\end{align}
The second term, $T_2$, can now be estimated similarly to give the bound

\begin{align}
T_2 & \leq \frac{\sup_ |\psi ^{\prime \prime}(x)|}{2 \mathbb{E} \{\psi^{\prime}(\epsilon)\}} n^{-1} \sum_{i=1}^n \left\{ f(x_i) - \mathbf{B}^{\top}(x_i) \boldsymbol{\beta} \right\}^2 \left\| \mathbf{B}(x_i) \right\|_\mathbf{G} \nonumber\\
& \leq  \frac{\sup_x |\psi ^{\prime \prime}(x)|}{2 \mathbb{E} \{\psi^{\prime}(\epsilon)\}}  \max_{1 \leq i\leq n} \left\|\mathbf{B}(x_i) \right\|_\mathbf{G} n^{-1} \sum_{i=1}^n  \left\{f(x_i) - \mathbf{B}^{\top}(x_i) \boldsymbol{\beta}\right\}^2 \nonumber,
\end{align}
where we again have used A.4.

Now let $f^{\star} = \sum_j \beta_j B_{K,j}$ denote the spline approximation to $f$ constructed with the help of Lemma 1 and let $\widetilde{f}(\cdot) = \sum_j \widetilde{\beta}_j B_{K,j}(\cdot)$ denote the zero of $\mathbf{\Psi}$. We have

\begin{equation}
|| \widetilde{f}-f^{\star}||_n^2 \leq 2 || \widetilde{f}-f||_n^2 + 2 || f-f^{\star}||_n^2  = O_P(C_n)+ O(D_K)
\end{equation}
where $C_n = \mathbb{E}\{ ||\widetilde{f}-f ||_n^2\}$. Since the approximation error $D_K$ is included in $C_n$, see Theorem 1 of \citet{Claeskens:2009}, we conclude that $|| \widetilde{f}-f^{\star}||_n^2 = O_P(C_n)$ and hence there exists a constant $K_1(\delta)$ such that $|| \widetilde{f}-f^{\star}||_n \leq 1/2 (K_1 C_n)^{1/2}$ for all large $n$ with probability greater than $1-\delta/4$. 

Define the sets

\begin{equation}
F_n := \left\{ s\in S_{K}^p : ||s-f^{\star}||_n^2 \leq K_1 C_n \right\}.
\end{equation}
Letting $B := (8 \delta^{-1} \Var \{\psi^{\prime}(\epsilon)\} (\mathbb{E} \{\psi^{\prime}(\epsilon) \})^{-2} )^{1/2}$, Markov's inequality and \eqref{eq:21} imply that 

\begin{equation*}
T_1 \leq B \left( n^{-1} A_{n}^2 ||f-s||_n^2 \right)^{1/2},
\end{equation*}
where $A_n^2 = \max_{i\leq n} ||\mathbf{B}(x_i)||_{\mathbf{G}}^2$, with probability greater than $1-\delta/8$. Working now on $F_n$, by the previous decomposition it follows that there exists a constant $K_2(\delta)$ such that $||s-f||_n^2 \leq K_2 C_n$ with probability also greater than $1-\delta/8$. Combining these two events we see that with probability greater than $1-\delta/4$ 

\begin{equation}
T_1 \leq B \left( n^{-1} K_2 C_n A_{n}^2 \right)^{1/2},
\end{equation}
for all large $n$. If $s \in F_n$ and we set $B^{\prime} := (2 \mathbb{E} \{\psi^{\prime}(\epsilon)\})^{-1} \sup_x | \psi^{\prime \prime}(x)|$ we also have
\begin{equation}
T_2 \leq B^{\prime} K_2 A_n C_n
\end{equation}
with probability greater than $1-\delta/8$ for all large $n$. 

Combining all the above bounds we obtain that for large $n$
\begin{align*}
|| \boldsymbol{\Phi} (\boldsymbol{\beta})/\mathbb{E} \{\psi^{\prime}(\epsilon)\} - \mathbf{G} (\boldsymbol{\beta} - \boldsymbol{\beta}^{\star}) ||_{_{\mathbf{G}}} \nonumber & \leq || \boldsymbol{\Phi} (\boldsymbol{\beta})/\mathbb{E} \{\psi^{\prime}(\epsilon)\} -  \boldsymbol{\Psi} (\boldsymbol{\beta}) ||_{\mathbf{G}} \\ &\phantom{{}=1} + ||  \mathbf{G}(\boldsymbol{\beta}-\tilde{\boldsymbol{\beta}})  -\mathbf{G}(\boldsymbol{\beta} - \boldsymbol{\beta}^{\star}) ||_{\mathbf{G}} \nonumber
\\ & \leq B( n^{-1}  K_2 C_n A_n^2)^{1/2} + B^{\prime} K_2 C_n A_n  + 2^{-1} (K_1 C_n)^{1/2} \nonumber
\end{align*}
on an event with probability greater than $1-\delta$, the first inequality following from the fact that 
\begin{equation*}
\mathbf{\Psi}(\boldsymbol{\beta}) = \mathbf{G} \boldsymbol{\beta} - n^{-1}\mathbf{B}^{\top} \mathbf{Y} = \mathbf{G}\boldsymbol{\beta} - \mathbf{G}\widetilde{\boldsymbol{\beta}},
\end{equation*}
as $\widetilde{\boldsymbol{\beta}}$ is the zero of $\mathbf{\Psi}$ and the second inequality from all previous probabilistic	bounds. Factoring $(K_1 C_n)^{1/2}$ we may certainly write
\begin{multline}
\label{eq:26}
 || \boldsymbol{\Phi} (\boldsymbol{\beta})/\mathbb{E} \{\psi^{\prime}(\epsilon)\} - \mathbf{G} (\boldsymbol{\beta} - \boldsymbol{\beta}^{\star}) ||_{_{\mathbf{G}}} \leq \\  \left\{  B( K_2 K_1^{-1} n^{-1} A_n^2)^{1/2} + B^{\prime} K_2 K_1^{-1/2} A_{n} C_n^{1/2} + 2^{-1} \right\} (K_1 C_n)^{1/2}.
\end{multline}
Further, since $\mathbf{D}$ is positive semidefinite and $\lambda_0>0$,

\begin{align}
\label{eq:27}
A_n^2 = \max_{1\leq i\leq n} ||\mathbf{B}(x_i) ||_\mathbf{G}^2 & = \max_{1\leq  i\leq n} n^{-1}\sum_{j=1}^n \left\{\mathbf{B}^{\top}(x_j) \mathbf{G}^{-1} \mathbf{B}(x_i) \right\}^2 \nonumber
\\ & = \max_{1 \leq i \leq n} \mathbf{B}^{\top}(x_i) \mathbf{G}^{-1} \left\{ n^{-1} \sum_{j=1}^n \mathbf{B}(x_j) \mathbf{B}^{\top}(x_j) \right\} \mathbf{G}^{-1} \mathbf{B}(x_i) \nonumber
\\ & \leq \max_{1 \leq i \leq n} \mathbf{B}^{\top}(x_i) \mathbf{G}^{-1} \mathbf{B}(x_i),
\end{align}
Lemma 2 allows us to deduce that $\lim_n A_n^2 C_n = \lim_n n^{-1} A_n^2  = 0$. This in turn means that the term inside the curly brackets in \eqref{eq:26} will be smaller than $1$ for $n$ sufficiently large. 

For such $n$ if $s \in F_n - f^{\star}$ with coefficient vector $\boldsymbol{\beta}$ and we define

\begin{equation}
\boldsymbol{U} (\boldsymbol{\beta}) : = \boldsymbol{\beta}- \mathbf{G}^{-1}\boldsymbol{\Phi}(\boldsymbol{\beta} + \boldsymbol{\beta}^{\star})/\mathbb{E} \{\psi^{\prime}(\epsilon)\}
\end{equation}
then on account of \eqref{eq:26} we must have  $||\mathbf{B} \mathbf{U}(\boldsymbol{\beta})||_{E}^2 \leq K_1 C_n $ for all large $n$. The set $F_n - f^{\star}$ is clearly convex. We claim that for sufficiently large $n$ it is also compact. Indeed, the set is finite-dimensional, closed and because the matrix $n^{-1} \mathbf{B}^{\top} \mathbf{B}$ is nonsingular for large $n$, see the proof of Lemma 2, it is also bounded. Hence the claim holds.

We now see that $\boldsymbol{U} (\boldsymbol{\beta})$ is a continuous function mapping the compact, convex set $F_n-f^{\star}$ into itself. Thus, Brouwer's theorem assures us of the existence of a fixed point $s^{\prime}$ in $F_n - f^{\star}$. Putting $\widehat{f} := s^{\prime} + f^{\star}$, it is easily seen that $ \boldsymbol{\Phi}( \widehat{\boldsymbol{\beta}} ) = \mathbf{0}$, i.e. $\widehat{\boldsymbol{\beta}}$ is the zero of the estimating equation . 

By the above, and since $n^{-1}||\mathbf{B}\boldsymbol{\widehat{\beta}}-\mathbf{B}\boldsymbol{\widetilde{\beta}}||_E^2 = ||\widehat{f} - \widetilde{f}||_n^2$, it now follows

\begin{align*}
||\widehat{f} - \widetilde{f} ||_n & =
|| \boldsymbol{\Phi} (\widehat{\boldsymbol{\beta}})/\mathbb{E} \{\psi^{\prime}(\epsilon)\} - \mathbf{\Psi} (\widehat{\boldsymbol{\beta}}) ||_{_{\mathbf{G}}}
\\ & = \left\{  B( K_2 K_1^{-1} n^{-1} A_n^2)^{1/2} + B^{\prime} K_2 K_1^{-1/2} A_{n} C_n^{1/2} \right\} (K_1 C_n)^{1/2},
\end{align*}
where the inequality holds on an event of probability greater than $1-\delta$. Applying again the limit relations $ \lim n^{-1} A_n^2 = \lim A_{n}^2 C_n = 0$ shows that $||\widehat{f} - \widetilde{f} ||_n^2 = o_P(C_n)$. This concludes the proof of the theorem.

\end{proof}

\begin{proof}[Proof of Corollary 1]

The proof follows from Lemma 8 and Lemma 9 of \cite{Stone:1985} after identifying the density $w$ of A.2 with the density of $t_{i}$. By assumption $w$ fulfils Condition 1 of \cite{Stone:1985}, as it is bounded away from zero and infinity. Check also that $K = o(n)$ by A.3 and therefore there exists an $\alpha>0$ such that

\begin{equation*}
\lim_{n \to \infty} n^{\alpha-1} K = 0
\end{equation*}
in either of the two cases $K \asymp n^{ 1/(2p+1)}$ or $K \asymp n^{v}, \ v \geq 1/(2q+1)$. 
\end{proof}

\begin{proof}[Proof of Theorem 3]

We will show that under the conditions of Theorem 3, 

\begin{equation*}
\Pr\left[\text{there is a solution} \ \widehat{f}(\cdot) \ \text{to} \  \boldsymbol{\Phi}_{\widehat{\sigma}}(\boldsymbol{\beta}) = \mathbf{0} \ \text{satisfying} \ ||\widehat{f} - \widetilde{f}||_n^2 \leq \delta C_n   \right] \geq 1-\delta,
\end{equation*}
with $C_n := \mathbb{E}\{||\widetilde{f} - f||_n^2 \}$, where $\widetilde{f}$ denotes the zero of estimating equation \eqref{eq:20}. Since

\begin{equation*}
||\boldsymbol{\Phi}_{\widehat{\sigma}}(\boldsymbol{\beta}) - \boldsymbol{\Psi}_{\sigma}(\boldsymbol{\beta} ||_{\mathbf{G}} \leq ||\boldsymbol{\Phi}_{\widehat{\sigma}}(\boldsymbol{\beta}) - \boldsymbol{\Phi}_{\sigma}(\boldsymbol{\beta}) ||_{\mathbf{G}} + || \boldsymbol{\Phi}_{\sigma}(\boldsymbol{\beta}) -  \boldsymbol{\Psi}_{\sigma}(\boldsymbol{\beta}||_{\mathbf{G}},
\end{equation*}
and from the proof of Theorem 1 it may be deduced that on $F_n$ there exists a constant $D>0$ such that

\begin{equation*}
|| \boldsymbol{\Phi}_{\sigma}(\boldsymbol{\beta}) -  \boldsymbol{\Psi}_{\sigma}(\boldsymbol{\beta}||_{\mathbf{G}} \leq D C_n^{1/2},
\end{equation*}
the theorem will be proven with a fixed-point argument, provided that we can establish the existence of another constant $Z>0$ such that

\begin{equation}
\label{eq:29}
||\boldsymbol{\Phi}_{\widehat{\sigma}}(\boldsymbol{\beta}) - \boldsymbol{\Phi}_{\sigma}(\boldsymbol{\beta}) ||_{\mathbf{G}}\leq Z C_n^{1/2}.
\end{equation}
Hereafter, we shall use $Z > 0$ as a generic constant which may take different values at different appearance. To prove the theorem note that, as $\widehat{\sigma} \xrightarrow{P} \sigma$, for every $\delta>0$ we will have $\widehat{\sigma}^{-1} \in (\sigma^{-1}-\delta, \sigma^{-1} + \delta)$ with probability tending to one. This implies that $\widehat{\sigma}^{-1} \geq (2\sigma)^{-1}$ with probability tending to one. Choose $\epsilon = (2\sigma)^{-1}$ in B.5. Adding and subtracting $\psi( r_i(\boldsymbol{\beta})/\widehat{\sigma}) \sigma^{-1}$, the boundedness of $\psi$ implied by B.5 yields
\begin{align*}
\left|\psi\left( \frac{r_i(\boldsymbol{\beta}}{\widehat{\sigma}} \right) \frac{1}{\widehat{\sigma}} - \psi\left( \frac{r_i(\boldsymbol{\beta}}{\sigma} \right) \frac{1}{\sigma} \right| & \leq Z \frac{|\widehat{\sigma}-\sigma|}{\widehat{\sigma}\sigma} + M_{\sigma} \frac{|\widehat{\sigma}-\sigma|}{\widehat{\sigma}\sigma} \leq Z \frac{|\widehat{\sigma}-\sigma|}{2\sigma^2},
\end{align*}
for some $Z>0$ with probability tending to one. Thus, there exists a constant $Z>0$ depending only on $\sigma$ such that
\begin{equation*}
|||\boldsymbol{\Phi}_{\widehat{\sigma}}(\boldsymbol{\beta}) - \boldsymbol{\Phi}_{\sigma}(\boldsymbol{\beta}) ||_{\mathbf{G}} \leq Z A_n |\widehat{\sigma}-\sigma|  
\end{equation*}
with probability tending to one. But $n^{1/2}(\widehat{\sigma}-\sigma) = O_P(1)$, by B.4 and $A_n C_n^{-1/2} n^{-1/2} = O_P(1)$, by the definition of $A_n$, inequality  \eqref{eq:27} and Lemma \ref{Lem:2}(iii). This shows that there exists another constant $Z>0$ such that with high probability
\begin{equation*}
||\boldsymbol{\Phi}_{\widehat{\sigma}}(\boldsymbol{\beta}) - \boldsymbol{\Phi}_{\sigma}(\boldsymbol{\beta}) ||_{\mathbf{G}} \leq Z C_n^{1/2},
\end{equation*}
for all large $n$. The theorem is thus established.

\end{proof}

\end{document}